
\relax
\documentclass[letterpaper]{article} 
\usepackage{aaai18}  
\usepackage{times}  
\usepackage{helvet}  
\usepackage{courier}  
\usepackage{url}  
\usepackage{graphicx}  
\frenchspacing  
\setlength{\pdfpagewidth}{8.5in}  
\setlength{\pdfpageheight}{11in}  

\usepackage{pgfplots}
\usepackage{tikzscale}
\usepgfplotslibrary{groupplots}
\usepackage{color}
\usepackage{amsmath,amssymb,amsthm}
\theoremstyle{plain}
\newtheorem{theorem}{Theorem}
\newtheorem{lemma}[theorem]{Lemma}
\newtheorem{proposition}[theorem]{Proposition}
\newtheorem{corollary}[theorem]{Corollary}
\theoremstyle{definition}
\newtheorem{definition}{Definition}
\newtheorem{example}{Example}
\newtheorem{remark}{Remark}
\usepackage[ruled,vlined,linesnumbered]{algorithm2e}
\SetKw{Continue}{continue}
\usepackage{caption} 
\usepackage{subcaption}
\usepackage{xspace}
\usepackage{adjustbox}
\usepackage{textcomp}
\usepackage{mathtools}
\usepackage{multirow}
\usepackage{IEEEtrantools}
\usepackage{tikz}
\usetikzlibrary{automata,arrows,shapes}
\usepackage{mathrsfs}
\pgfdeclarelayer{background}
\pgfsetlayers{main,background}
\tikzstyle{vertex}=[circle,fill=black!10,minimum size=20pt,inner sep=0pt]
\tikzstyle{edge} = [draw,thick]
\tikzstyle{weight} = [font=\small]
\tikzstyle{redundant edge} = [draw,line width=5pt,-,red,opacity=.7]

\usepackage{pgfplots}
\pgfplotsset{width=.3\linewidth,compat=1.9}

\providecommand\given{}
\newcommand\SetSymbol[1][]{\nonscript\:#1\vert\nonscript\:
  \mathopen{}\allowbreak}
\DeclarePairedDelimiterX\Set[1]\{\}{%
  \renewcommand\given{\SetSymbol[\delimsize]}
  #1 }

\newcommand{\la}{\langle}
\newcommand{\ra}{\rangle}
\newcommand{\comp}{\otimes}
\newcommand{\AC}{\ensuremath{\mathsf{ACSTP}}\xspace}
\newcommand{\DAC}{\ensuremath{\mathsf{DisACSTP}}\xspace}

\newcommand{\PPC}{\ensuremath{\mathsf{P\textsuperscript{3}C}}\xspace}
\newcommand{\DPPC}{\ensuremath{\mathsf{D\triangle PPC}}\xspace}
\newcommand{\N}{\mathcal{N}}
\newcommand{\M}{\mathcal{M}}
\newcommand{\network}{\mathcal{N}= \la V,D,C \ra}

\newcommand{\graph}{G= (V,E)}

\newcommand{\magenta}[1]{{#1}}
\newcommand{\blue}[1]{\textcolor{black}{#1}}
\newcommand{\red}[1]{\textcolor{black}{#1}}


\usepackage[hidelinks,draft=true]{hyperref}
\AtBeginDocument{}
\setkeys{Gin}{draft=false}      
\usepackage[obeyDraft,colorinlistoftodos]{todonotes}

\colorlet{JL}{green!20!yellow!60!white} \colorlet{JLline}{JL!80!black}

\colorlet{SL}{red!20!yellow!60!white}
\colorlet{SLline}{SL!80!black}

\colorlet{SK}{blue!10!white}
\colorlet{SKline}{SK!80!black}


  \pdfinfo{
/Title (Multiagent Simple Temporal Problem: The Arc-Consistency
  Approach)
/Author (Shufeng Kong, Jae Hee Lee, Sanjiang Li)}
\setcounter{secnumdepth}{1}  
\begin{document}
%
\title{Multiagent Simple Temporal Problem: The Arc-Consistency
  Approach}
\author{Shufeng Kong$^1$ \and  Jae Hee Lee$^1$ \and Sanjiang Li$^{1,2}$\\
  $^1$Centre for Quantum Software and Information, FEIT, University of Technology Sydney, Australia\\
  $^2$UTS-AMSS Joint Research Laboratory, AMSS, Chinese Academy of Sciences, China\\
  shufeng.kong@student.uts.edu.au, \{jaehee.lee,
  sanjiang.li\}@uts.edu.au%
}

\maketitle
\begin{abstract}

  The Simple Temporal Problem (STP) is a fundamental temporal
  reasoning problem and has recently been extended to the Multiagent
  Simple Temporal Problem (MaSTP). In this paper we present a novel
  approach that is based on enforcing arc-consistency (AC) on the
  input (multiagent) simple temporal network. We show that the
  AC-based approach is sufficient for solving both the STP and MaSTP
  and provide efficient algorithms for them. As our AC-based approach
  does not impose new constraints between agents, it does not violate
  the privacy of the agents and is superior to the state-of-the-art
  approach to MaSTP. Empirical evaluations on diverse benchmark
  datasets also show that our AC-based algorithms for STP and MaSTP
  are significantly more efficient than existing approaches.
  
\end{abstract}

\section{Introduction}

The Simple Temporal Problem (STP)~\cite{DechterMP91} is arguably the
most well-known quantitative temporal representation framework in AI.
The STP considers time points as the variables and represents temporal
information by a set of unary or binary constraints, each specifying
an interval on the real line. Since its introduction in 1991, the STP
has become an essential sub-problem in planning or scheduling
problem~\cite{bartak_introduction_2014}.

While the STP is initially introduced for a single scheduling agent
and is solved by centralized algorithms, many real-world applications
involve multiple agents who interact with each other to find a
solution like the following example:

\begin{example}\label{example}
  When Alice is looking for a position at company $X$, she might need
  to arrange an interview appointment with $X$. Suppose that her
  colleague Bob is also applying for the position and Alice and Bob
  are both applying for another position at company $Y$. To
  represent and solve such \blue{an} interview scheduling problem, we need a
  multiagent framework \magenta{(see Figure~\ref{fig:mastp} for an
    illustration)}.
\end{example}
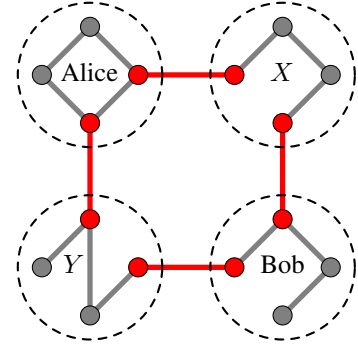
\begin{figure}[t]
  \centering
  \definecolor{yqyqyq}{rgb}{0.5019607843137255,0.5019607843137255,0.5019607843137255}
  \definecolor{ffqqqq}{rgb}{1.,0.,0.}
  \begin{tikzpicture}[line cap=round,line join=round,>=triangle
    45,x=.8cm,y=.8cm, scale=0.8]
    \clip(1.5, 1.45) rectangle (10,9);%
    \draw [line width=2.pt,color=yqyqyq] (3.,7.)-- (4.,8.); \draw
    [line width=2.pt,color=yqyqyq] (4.,8.)-- (5.,7.); \draw [line
    width=2.pt,color=yqyqyq] (5.,7.)-- (4.,6.); \draw [line
    width=2.pt,color=yqyqyq] (4.,6.)-- (3.,7.); \draw [line
    width=2.pt,,color=ffqqqq] (5.,7.)-- (7.,7.); \draw [line
    width=2.pt,,color=ffqqqq] (4.,6.)-- (4.,4.); \draw [line
    width=2.pt,,color=ffqqqq] (8.,6.)-- (8.,4.); \draw [line
    width=2.pt,,color=ffqqqq] (5.,3.)-- (7.,3.); \draw [line
    width=2.pt,color=yqyqyq] (8.,6.)-- (9.,7.); \draw [line
    width=2.pt,color=yqyqyq] (9.,7.)-- (8.,8.); \draw [line
    width=2.pt,color=yqyqyq] (8.,8.)-- (7.,7.); \draw [line
    width=2.pt,color=yqyqyq] (8.,2.)-- (9.,3.); \draw [line
    width=2.pt,color=yqyqyq] (9.,3.)-- (8.,4.); \draw [line
    width=2.pt,color=yqyqyq] (8.,4.)-- (7.,3.); \draw [line
    width=2.pt,color=yqyqyq] (3.,3.)-- (4.,4.); \draw [line
    width=2.pt,color=yqyqyq] (4.,4.)-- (4.,2.); \draw [line
    width=2.pt,color=yqyqyq] (4.,2.)-- (5.,3.); \draw (7.55,7.45)
    node[anchor=north west] {$X$}; \draw (3.2,7.45) node[anchor=north
    west] {$\text{Alice}$}; \draw (7.35,3.45) node[anchor=north west]
    {$\text{Bob}$}; \draw (3.25,3.45) node[anchor=north west] {$Y$};
    \draw [line width=0.8pt,dash pattern=on 3pt off 3pt] (4.,7.)
    circle (1.5); \draw [line width=0.8pt,dash pattern=on 3pt off 3pt]
    (8.,7.) circle (1.5); \draw [line width=0.8pt,dash pattern=on 3pt
    off 3pt] (4.,3.) circle (1.5); \draw [line width=0.8pt,dash
    pattern=on 3pt off 3pt] (8.,3.) circle (1.5);
    \begin{scriptsize}
      \draw [fill=ffqqqq] (5.,3.) circle (4.5pt); \draw [fill=ffqqqq]
      (4.,4.) circle (4.5pt); \draw [fill=ffqqqq] (7.,3.) circle
      (4.5pt); \draw [fill=yqyqyq] (3.,3.) circle (4.5pt); \draw
      [fill=yqyqyq] (4.,2.) circle (4.5pt); \draw [fill=yqyqyq]
      (8.,2.) circle (4.5pt); \draw [fill=ffqqqq] (8.,4.) circle
      (4.5pt); \draw [fill=yqyqyq] (9.,3.) circle (4.5pt); \draw
      [fill=ffqqqq] (4.,6.) circle (4.5pt); \draw [fill=ffqqqq]
      (5.,7.) circle (4.5pt); \draw [fill=ffqqqq] (7.,7.) circle
      (4.5pt); \draw [fill=ffqqqq] (8.,6.) circle (4.5pt); \draw
      [fill=yqyqyq] (9.,7.) circle (4.5pt); \draw [fill=yqyqyq]
      (8.,8.) circle (4.5pt); \draw [fill=yqyqyq] (4.,8.) circle
      (4.5pt); \draw [fill=yqyqyq] (3.,7.) circle (4.5pt);
    \end{scriptsize}
  \end{tikzpicture}
  \caption{An illustration of Example~\ref{example}. Alice, Bob, company
    $X$, company $Y$ are four agents, each owning a local simple
    temporal network. The circles represent variables and edges
    constraints. Red edges represent constraints that are shared by
    two different agents.
  }\label{fig:mastp}
\end{figure}

Recently, the extension of STP to multiagent STP (MaSTP) has been
provided in \cite{DBLP:journals/jair/BoerkoelD13}, which presents a
formal definition of the MaSTP as well as a distributed algorithm,
called \DPPC, for computing the complete joint solution space.

However, as \DPPC is based on the \PPC algorithm~\cite{PlankenWK08},
which triangulates the input constraint graph, it has the drawback of
creating new constraints between agents \magenta{that are possibly not
  directly connected}. In Figure~\ref{fig:mastp}, \DPPC triangulates
the inner cycle by adding at least one new constraint either between
$X$ and $Y$ or between Alice and Bob. Neither of these new constraints
are desirable, as they introduce constraints between two previously
not directly connected agents and thus present a threat to the privacy
of the relevant agents.

As the recent technological advancements have allowed for solving
larger problems that are highly interwoven and dependent on each
other, efficiency and privacy have become critical requirements. To
address this challenge, we propose a new approach to solve the MaSTP,
which is based on \emph{arc-consistency}.

A constraint $R$ between two variables $x,y$ is called arc-consistent
(AC), if for every value $d_x$ from the domain of $x$ there is a
\magenta{value} $d_y$ in the domain of $y$ such that $(d_x,d_y)\in R$.
While AC is an important tool for solving finite (multiagent)
constraint satisfaction problems (CSPs)
\red{\cite{montanari1974networks,baudot1997analysis,nguyen1998distributed,hamadi2002optimal}}
at first glance it is not clear how it can be applied to solving CSPs
with real domains such as the STP, \red{because either the existing
  AC algorithms are fine-grained and work with each single element of
  a domain to enforce AC, which is impossible for real domains, or
  they are coarse-grained, but cannot guarantee their termination, as
  real domains can be infinitely refined when constraints are
  propagated.}\footnote{\red{\citeauthor{DechterMP91}
    \shortcite{DechterMP91} for example, suggest discretizing the
    domains to overcome this issue, in which case the total number of
    contraint propagations would depend on the sizes of the domains.
    The performance of our AC algorithm for (multiagent) STP does not
    depend on the sizes of the domains.}}

Our contributions in this paper are as follows.

\begin{itemize}
\item We provide the first AC-based approach for solving STP and
  analyze its computational complexity
  (Section~\ref{sec:solving-stp-with}).
\item We provide the first AC-based approach for solving multiagent
  STP, which preserves the privacy of the agents, and analyze its
  computational complexity (Section~\ref{sec:solving-mastp-with}).
\item We experimentally show that both our centralized and distributed
  algorithms outperform \blue{their} existing counterparts for solving STP
  (Section~\ref{sec:evaluation}).
\end{itemize}

The next section gives a formal introduction to STP.





\section{The Simple Temporal Problem}
\label{sec:preliminaries}

This section briefly introduces the STP. Details can be found in
\cite{DechterMP91}.

The \emph{simple temporal problem} (STP) is a constraint satisfaction
problem where each constraint is a set of linear inequalities of the
form
\begin{equation}\label{eq:1.1}
  a_{vw} \le w - v \le b_{vw},
\end{equation}
where $a_{vw},b_{vw}$ are constants and $v, w$ are variables defined
on a continuous domain representing time points. The constraint in
\eqref{eq:1.1} is abbreviated as $I_{vw}= [a_{vw}, b_{vw}]$. As
\eqref{eq:1.1} is equivalent to $-b_{vw} \le v - w \le -a_{vw}$, we
also obtain $I_{wv} = I^{-1}_{vw} = [-b_{vw}, -a_{vw}]$. The
\emph{domain} of each variable $v$ is an interval $I_v=[a_v,b_v]$,
where $I_v$ could be a singleton or empty. Assume that $o$ is a
special \emph{auxiliary} variable that represents the fixed zero
temporal point. Then the domain $I_v$ can also \magenta{be} regarded
as a constraint from $o$ to $v$ and
$I_v = [a_v, b_v] = [a_{ov}, b_{ov}] = I_{ov}$.
  
Algebraic operations on STP constraints are defined as follows. The
\emph{intersection} of two STP constraints defined on variables $v,w$
yields a new constraint over $v,w$ that represents the conjunction of
the constraints. It is defined as
$I_{vw} \cap I'_{vw} := [\max\{a_{vw}, a'_{vw}\}, \min\{b_{vw},
b'_{vw}\}]$.

The \emph{composition} of an STP constraint $I_{vu}$ over variables
$v,u$ and another STP constraint $I_{uw}$ over $u,w$ yields a new STP
constraint over $v,w$ that is inferred from the other two constraints
and is defined as
$I_{vu} \comp I_{uw}:= [a_{vu} + a_{uw}, b_{vu} + b_{uw}]$. Here we
require that $[a,b]\comp \varnothing=\varnothing$ for any $a\leq b$.

\begin{remark}
  For STP constraints, the composition and intersection are
  associative and, as noted in~\cite{DechterMP91}, composition
  distributes over \emph{non-empty} intersection for intervals, i.e.,
  $I \comp (J \cap K) = (I \comp J) \cap (I \comp K)$ for any three
  intervals $I,J,K$ such that $J\cap K\not=\varnothing$.

  
\end{remark}

\begin{definition}
  An instance of STP is called a \emph{simple temporal network} (STN)
  and is a tuple $\la V,D,C\ra$, where $V$ is a finite set of
  variables, $D=\{I_v \mid v \in V\}$ is a set of intervals, and $C$
  is a set of STP constraints defined on $V$.
\end{definition}
\noindent We assume that all variables in $V$ appear in $C$ and at
\emph{most} one constraint exists between any pair of variables $v$ and
$w$. Moreover, if $I_{vw}=[a,b]$ is the constraint in $C$ from $v$ to
$w$, we always assume that the constraint $I_{\blue{wv}}=I_{vw}^{-1}=[-b,-a]$
is also in $C$. As previously mentioned, the domain $I_v$ of each
variable $v$ can be regarded as either a unary constraint, or a binary
constraint $I_{ov}=I_v$, where $o$ is a fixed variable representing
the zero time point.
  
An STN naturally induces a graph in the following sense.

 \begin{definition}
   The \emph{constraint graph} $G_{\mathcal{N}} = (V,E)$ of an STN
   $\network$ is an undirected graph, where the set $E$ of edges
   consists of constrained unordered pairs of variables in $C$, i.e.,
   \[
     E = \Set*{\{v,w\} \given v,w \in V, v \ne w, I_{vw} \in C}.
   \]
 \end{definition}

 Let $G_{\mathcal{N}} = (V,E)$ be the constraint graph of an STN $\N$.
 We can use a labelled \emph{directed} graph to illustrate \red{$\N$}, 
 where for any undirected edge $\{v,w\} \in E$ there is exactly one
 directed edge $(v, w)$ that is labelled with the corresponding
 interval $[a_{vw}, b_{vw}]$.

 A \emph{path} $\pi$ from $v$ to $w$ in $G_{\mathcal{N}}$ is a
 sequence of variables $u_{0},u_{1},...,u_{k}$ such that $v = u_{0}$,
 $w = u_{k}$, and $\{u_{s}, u_{s+1}\}$ is an edge in $E$ for each
 $s = 0, \dotsc, k-1$ \red{($k$ is called the length of $\pi$)}. We
 write $\bigotimes \pi$ for the composition of all these
 $I_{u_s,u_{s+1}}$, i.e.,
 \begin{align}
   \label{eq:cpi}
   \bigotimes \pi &= I_{u_0,u_{1}} \otimes I_{u_1,u_{2}} \otimes ... \otimes I_{u_{k-1},u_{k}}
 \end{align}
 If $v = w$, then we call $\pi$ a \emph{cycle} at $v$. \magenta{For a
   cycle $\pi$, let $[a, b] = \bigotimes\pi$.} We call $\pi$ a
 \emph{negative cycle} if  $b < 0$.

\begin{definition}
  A \emph{solution} of an STN $\network$ is an assignment, that
  assigns to each variable $v \in V$ a time point from $I_v \in D$
  such that all constraints in $C$ are satisfied. $\N$ is said to be
  \emph{consistent} if $\N$ has a solution. Two STNs are
  said to be \emph{equivalent} if they have the same solution set.
\end{definition}

\begin{definition}[Minimality]
  Let $\network$ be a consistent STN and let $v$ and $w$ be variables
  in $V$. A constraint $I_{vw}$ from $v$ to $w$ is said to be
  \emph{minimal} if every assignment that assigns time points from
  domains $I_v$ and $I_w$ to $v$ and $w$, respectively, and satisfies
  $I_{vw}$ can be extended to a solution of $\N$. A domain $I_v$ of
  $v \in V$ is said to be minimal if every assignment of a time point
  from $I_v$ to $v$ can be extended to a solution of $\N$. We say $\N$
  is minimal if every constraint in $C$ as well as every domain in $D$
  is minimal (\magenta{note that, since we regard domains as
    constraints between the zero time point $o$ and variables, we also
    require the domains to be minimal}).
\end{definition}

\section{Solving the STP with
  Arc-Consistency}\label{sec:solving-stp-with}

In this section we show that enforcing arc-consistency is sufficient
to solve the STP.

\begin{definition}
  Let $\network$ be an STN. Suppose $v$ and $w$ are two variables in
  $V$, $I_v$ and $I_w$ are, respectively, their domains, and $I_{vw}$
  is a constraint in $C$ from $v$ to $w$. We say that $I_{vw}$ is
  \emph{arc-consistent} (AC) (relative to $I_v$ and $I_w$) if for any
  $t_v \in I_v$ there exists some $t_w \in I_w$ such that
  $ t_w-t_v \in I_{vw}$, i.e., $a_{vw}\le t_w-t_v \le b_{vw}$. We say
  that $\N$ is AC if both $I_{vw}$ and $I_{wv}$ are AC for every
  constraint $I_{vw} \in C$.

  An STN $\mathcal{N}'=\la V, D', C\ra$ with $D'=\{I'_v \mid v\in V\}$
  is called the AC-\emph{closure} of $\N$, if $\mathcal{N}'$ is the
  largest arc-consistent STN which is equivalent to $\N$, in the sense that for
  every other arc-consistent STN $\mathcal{N}''=\la V, D'', C\ra$ with
  $D''=\{I''_i \mid v\in V\}$, we have that $I''_v \subseteq I'_v$ for all $v\in V$.
\end{definition}

\begin{lemma}\label{lemma:AC-0}
  Let $\network$ be an STN and $v,w \in V$ two variables that are
  constrained by $I_{vw}$ in $C$. Then $I_{vw}$ is arc-consistent
  relative to $I_v$ and $I_w$ iff
  $I_v\subseteq \red{I_w\comp I_{wv}}$.
\end{lemma}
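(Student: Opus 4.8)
The plan is to reduce the pointwise arc-consistency condition, quantified over all $t_v \in I_v$, to a single interval-containment statement. First I would fix $t_v \in I_v$ and rewrite the witness condition: $t_w - t_v \in I_{vw} = [a_{vw}, b_{vw}]$ holds iff $t_w \in [t_v + a_{vw}, t_v + b_{vw}]$, i.e.\ iff $t_w$ lies in the shifted interval $t_v + I_{vw}$. Hence a witness $t_w \in I_w$ exists for this particular $t_v$ if and only if $(t_v + I_{vw}) \cap I_w \neq \varnothing$, so that $I_{vw}$ is arc-consistent relative to $I_v, I_w$ exactly when this intersection is nonempty for every $t_v \in I_v$.

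Next I would expand the overlap condition for the two nonempty intervals $t_v + I_{vw} = [t_v + a_{vw}, t_v + b_{vw}]$ and $I_w = [a_w, b_w]$. Two nonempty closed intervals $[p,q]$ and $[r,s]$ intersect iff $p \le s$ and $r \le q$; applying this here yields $t_v + a_{vw} \le b_w$ together with $a_w \le t_v + b_{vw}$, which rearrange into the single two-sided inequality
\[
  a_w - b_{vw} \le t_v \le b_w - a_{vw}.
\]
Since this says precisely that $t_v \in [a_w - b_{vw}, b_w - a_{vw}]$, demanding it for every $t_v \in I_v$ is exactly the containment $I_v \subseteq [a_w - b_{vw}, b_w - a_{vw}]$.

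Finally I would identify the right-hand interval with the claimed composition. Using $I_{wv} = I_{vw}^{-1} = [-b_{vw}, -a_{vw}]$ and $I_w = I_{ow} = [a_w, b_w]$, the definition of composition gives
\[
  I_w \comp I_{wv} = [a_w - b_{vw}, b_w - a_{vw}],
\]
so the containment derived above is literally $I_v \subseteq I_w \comp I_{wv}$; reading the chain of equivalences in both directions then settles the lemma.

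I expect the only delicate point to be the degenerate cases in which an interval is empty, since the overlap criterion above presupposes nonemptiness. If $I_v = \varnothing$ the arc-consistency condition is vacuously true and $I_v \subseteq I_w \comp I_{wv}$ holds trivially; if $I_w = \varnothing$ then no witness can exist, so arc-consistency forces $I_v = \varnothing$, which matches $I_w \comp I_{wv} = \varnothing$ under the composition convention that an empty operand yields the empty interval. Handling these boundary cases cleanly, and briefly justifying the interval-overlap criterion rather than merely asserting it, are the main things to get right; the core equivalence is then a routine endpoint computation.
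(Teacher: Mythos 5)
Your proposal is correct and follows essentially the same route as the paper's proof: both reduce arc-consistency to computing the set of points $t_v$ admitting a witness $t_w \in I_w$, arrive at the endpoint inequalities $a_w - b_{vw} \le t_v \le b_w - a_{vw}$, and identify that interval with $I_w \comp I_{wv}$ (you do it via the interval-overlap criterion, the paper via direct manipulation of the quantified inequalities, which is a purely cosmetic difference). Your explicit treatment of the empty-interval cases is a small point of extra care that the paper's proof silently glosses over, and it is handled correctly.
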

\begin{proof}
  \red{%
    It suffices to show that
    \begin{equation}
      I_w\comp I_{wv} = \Set*{x \in \mathbb{R} \given \exists y \in
        I_w \text{ s.t.\ }  y - x
        \in I_{vw}}\label{eq:1}
    \end{equation}
    Let $I_v=[a,b]$, $I_w=[c,d]$ \red{and $I_{vw} = [e,f]$}. Then
    \begin{IEEEeqnarray*}{rCl}
      \IEEEeqnarraymulticol{3}{l}{\Set*{x \in \mathbb{R} \given \exists y \in I_w \text{ s.t.\ } y - x \in I_{vw}}}\\
      &=& \Set*{x \in \mathbb{R} \given \exists c \le y \le d \text{ s.t.\ }  e \le y - x \le f}\\
      &=& \Set*{x \in \mathbb{R} \given \exists c \le y \le d \text{ s.t.\ } y - f \le x \le y - e}\\
      &=& \Set*{x \in \mathbb{R} \given c - f \le x \le  d - e}\\
      &=& [c, f] \comp [-f, -e] = I_w\comp I_{wv},
    \end{IEEEeqnarray*}%
    which proves Eq.~\eqref{eq:1}.
  } 
\end{proof}

\begin{lemma}\label{lemma:AC}
  Let $\network$ be an arc-consistent STN and $v,w \in V$ two
  variables that are constrained by $I_{vw}$ in $C$. Then
  $I_v\subseteq I_w\comp I_{wv}$.
\end{lemma}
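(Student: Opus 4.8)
The plan is to derive this statement directly as a corollary of Lemma~\ref{lemma:AC-0}, since the two assertions are essentially the same characterization read under a slightly stronger hypothesis. The key observation is that the hypothesis here---that the whole network $\mathcal{N}$ is arc-consistent---already contains, as a special case, exactly the property that Lemma~\ref{lemma:AC-0} characterizes for a single constraint.

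First I would unpack the definition of an arc-consistent STN. By definition, $\mathcal{N}$ being AC means that for every constraint $I_{vw}\in C$, both $I_{vw}$ and $I_{wv}$ are arc-consistent relative to the corresponding domains. Since the variables $v$ and $w$ are assumed to be constrained by $I_{vw}$ in $C$, the constraint $I_{vw}$ is in particular arc-consistent relative to $I_v$ and $I_w$. Next I would apply the forward (only-if) direction of Lemma~\ref{lemma:AC-0} to this single constraint, which yields precisely $I_v\subseteq I_w\comp I_{wv}$, the desired inclusion.

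I do not expect any genuine obstacle: the whole content of the lemma is already supplied by Lemma~\ref{lemma:AC-0}, and the step from a pointwise arc-consistency condition to the algebraic inclusion has been carried out there. The only point requiring a little care is bookkeeping with the orientation of the variables and intervals, namely ensuring that the instance of Lemma~\ref{lemma:AC-0} invoked is the one associated with the ordered pair $(v,w)$---with domains $I_v$, $I_w$ and constraint $I_{vw}$---so that the conclusion emerges with the composition in the order $I_w\comp I_{wv}$ and not in the reversed order. With that matching in place the result is immediate.
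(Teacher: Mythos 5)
Your proposal is correct and coincides with the paper's own proof: both derive the inclusion by noting that arc-consistency of $\mathcal{N}$ gives arc-consistency of $I_{vw}$ relative to $I_v$ and $I_w$, and then applying the forward direction of Lemma~\ref{lemma:AC-0}. Your extra care about the orientation of the pair $(v,w)$ is the only step the paper leaves implicit, and it is handled exactly as you describe.
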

\begin{proof}
  This follows directly from Lemma~\ref{lemma:AC-0} and that $I_{vw}$
  is AC relative to $I_v$ and $I_w$.
\end{proof}



The following result directly follows from Lemma~\ref{lemma:AC}.

\begin{corollary}\label{cor:path}
  Let $\network$ be an arc-consistent STN. Let $\pi$ be a path in $\N$
  from $w$ to $v$. Then $I_v \subseteq I_w \comp \bigotimes \pi$.

\end{corollary}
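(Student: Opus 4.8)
The plan is to prove Corollary~\ref{cor:path} by induction on the length $k$ of the path $\pi = u_0, u_1, \dotsc, u_k$, where $u_0 = w$ and $u_k = v$. The base case $k = 1$ is precisely Lemma~\ref{lemma:AC}: a path of length one is the single edge $\{w,v\}$, so $\bigotimes\pi = I_{wv}$ and the asserted inclusion reads $I_v \subseteq I_w \comp I_{wv}$, which is exactly what the lemma delivers.

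For the inductive step I would peel off the first edge $\{u_0,u_1\}$ of $\pi$. Write $\pi'$ for the subpath $u_1, \dotsc, u_k$ from $u_1$ to $v$, so that, after regrouping, $\bigotimes\pi = I_{u_0,u_1} \comp \bigotimes\pi'$. Since $\pi'$ has length $k-1$, the induction hypothesis applied to $\pi'$ gives $I_v \subseteq I_{u_1} \comp \bigotimes\pi'$. Independently, $\{u_0,u_1\}$ being an edge means $w = u_0$ and $u_1$ are constrained in $C$, so Lemma~\ref{lemma:AC} applied to this pair yields $I_{u_1} \subseteq I_w \comp I_{u_0,u_1}$.

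It then remains to glue the two containments together. Here I would invoke that composition is monotone with respect to $\subseteq$, namely that $A \subseteq B$ implies $A \comp X \subseteq B \comp X$; composing the inclusion $I_{u_1} \subseteq I_w \comp I_{u_0,u_1}$ on the right by $\bigotimes\pi'$ and chaining with the induction hypothesis, associativity of composition (recorded in the Remark) gives
\[
  I_v \;\subseteq\; (I_w \comp I_{u_0,u_1}) \comp \bigotimes\pi' \;=\; I_w \comp \bigl(I_{u_0,u_1} \comp \bigotimes\pi'\bigr) \;=\; I_w \comp \bigotimes\pi,
\]
which is the desired inclusion and completes the induction.

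The argument is essentially routine, so the only points needing care are the two algebraic properties of $\comp$ used to fuse the steps: monotonicity under $\subseteq$ and associativity. Associativity is already available from the Remark, and monotonicity is immediate from the definition $[a_1,b_1]\comp[a_2,b_2] = [a_1+a_2,\,b_1+b_2]$, since enlarging one factor can only enlarge the composed interval. The one genuine subtlety I would flag is the empty-interval case: each constraint $I_{u_s,u_{s+1}}$ along $\pi$ is nonempty, so $\bigotimes\pi$ is nonempty, but a domain such as $I_w$ could be empty, in which case $I_w \comp \bigotimes\pi = \varnothing$ by the convention $[a,b]\comp\varnothing = \varnothing$; one should check the inclusion still holds, which it does, because arc-consistency forces $I_v = \varnothing$ as well (emptiness propagates along edges by the AC condition). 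This edge case aside, the inductive skeleton above is the whole proof.
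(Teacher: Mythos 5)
Your proof is correct and is essentially the paper's own argument: the paper gives no explicit proof, stating only that the corollary ``directly follows from Lemma~\ref{lemma:AC}'', and your induction along the path---applying Lemma~\ref{lemma:AC} edge by edge and fusing the inclusions via monotonicity and associativity of $\comp$---is precisely the routine unfolding of that claim. Your handling of the empty-domain edge case is a sound (if optional) addition, since the chain of inclusions already forces $I_v = \varnothing$ whenever $I_w = \varnothing$.
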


\begin{lemma}\label{lemma:minimal}
  Let $\network$ be an arc-consistent STN and $v,w$ variables in $V$.
  If $\N$ is consistent, then $I_v\subseteq I_w\otimes I^m_{wv}$, where
  $I^m_{wv}$ is the minimal constraint from $w$ to $v$.
\end{lemma}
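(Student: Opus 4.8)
The plan is to bound $I_v$ by composing along every path and then intersect, combining Corollary~\ref{cor:path} with the classical fact that in a consistent STN the minimal constraint between two variables is the intersection of the path-compositions over all connecting paths. The first step is to invoke this characterization~\cite{DechterMP91}: since $\N$ is consistent, its distance graph contains no negative cycle, and
\[
  I^m_{wv} = \bigcap_{\pi}\, \bigotimes \pi ,
\]
where $\pi$ ranges over all paths from $w$ to $v$ in $G_{\mathcal{N}}$; consistency guarantees that this intersection is nonempty.

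The second step is to apply Corollary~\ref{cor:path}. As $\N$ is arc-consistent, for every path $\pi$ from $w$ to $v$ we have $I_v \subseteq I_w \comp \bigotimes \pi$. Intersecting over all such $\pi$ gives
\[
  I_v \;\subseteq\; \bigcap_{\pi}\bigl( I_w \comp \bigotimes \pi \bigr).
\]

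The final step is to interchange the composition by $I_w$ with the intersection. Writing $I_w = [c,d]$ and $\bigotimes \pi = [l_\pi, u_\pi]$, composition merely shifts endpoints, so $I_w \comp \bigotimes\pi = [c + l_\pi,\, d + u_\pi]$, whence $\bigcap_\pi(I_w\comp\bigotimes\pi) = [\,c + \sup_\pi l_\pi,\; d + \inf_\pi u_\pi\,] = I_w \comp [\sup_\pi l_\pi, \inf_\pi u_\pi] = I_w \comp I^m_{wv}$, the last equality being the characterization from the first step. Combining with the previous inclusion yields $I_v \subseteq I_w \comp I^m_{wv}$, as required.

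The main obstacle is justifying this last interchange, because the distributivity recorded in the Remark applies only to finite nonempty intersections, whereas the family of paths is infinite. The endpoint computation above sidesteps the issue directly: an arbitrary intersection of intervals is again an interval, whose lower endpoint is the supremum of the individual lower endpoints and whose upper endpoint is the infimum of the individual upper endpoints, and composition with the fixed interval $I_w$ commutes with taking these suprema and infima. The nonemptiness needed for $I^m_{wv} = [\sup_\pi l_\pi, \inf_\pi u_\pi]$ to be a genuine interval is exactly the consistency hypothesis. Alternatively, one may note that the extremal bounds $\sup_\pi l_\pi$ and $\inf_\pi u_\pi$ are attained by finitely many simple paths, reducing the claim to the finite distributivity already available.
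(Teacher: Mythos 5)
Your proof is correct and takes essentially the same route as the paper's: both combine Corollary~\ref{cor:path} with the Dechter--Meiri--Pearl characterization of $I^m_{wv}$ as the intersection of the compositions along all paths from $w$ to $v$, and then pull the composition with $I_w$ out of that (nonempty) intersection. Your explicit endpoint computation additionally justifies this interchange for the possibly infinite family of paths --- a point the paper's one-line appeal to binary distributivity leaves implicit --- so your write-up is, if anything, slightly more careful than the original.
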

\begin{proof}
  Since $\N$ is consistent, $I^m_{wv}$ is nonempty. Recall that
  $I^m_{wv}$ is the intersection of the compositions along all paths
  in $\mathcal{N}$ from $w$ to $v$ (cf.~\cite[\S 3]{DechterMP91}) and
  composition distributes over non-empty intersection for intervals.
  The result follows directly from Corollary~\ref{cor:path}.
\end{proof}

 \begin{lemma}[\cite{Shostak:1981:DLI:322276.322288}] \label{lem:inconsisent_stn}
   Suppose $\network$ is an STN. Then $\N$ is inconsistent if and only
   if there exists a negative cycle.
 \end{lemma}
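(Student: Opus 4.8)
The plan is to pass to the standard weighted \emph{distance graph} of $\N$ and reduce the statement to the classical feasibility criterion for systems of difference constraints. Let $G_d$ have vertex set $V$ (including the zero point $o$) and, for every constraint $I_{vw}=[a_{vw},b_{vw}]$ in $C$, a directed edge $v\to w$ of weight $b_{vw}$. Since $I_{wv}=I_{vw}^{-1}=[-b_{vw},-a_{vw}]$ also lies in $C$, the lower bound $a_{vw}\le w-v$ is recorded as the edge $w\to v$ of weight $-a_{vw}$; hence every constraint of $\N$ becomes a single upper-bound inequality of the form $t(y)-t(x)\le \mathrm{wt}(x\to y)$ along an edge of $G_d$, where $t$ denotes a candidate assignment. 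The key bookkeeping observation is that a cycle $\pi=u_0,\dots,u_k$ (with $u_0=u_k$) satisfying $\bigotimes\pi=[a,b]$ has $b=\sum_s b_{u_s,u_{s+1}}$, so the upper bound $b$ is exactly the total edge weight of $\pi$ in $G_d$; thus ``$\pi$ is a negative cycle'' in the sense of the excerpt is the same as ``$\pi$ has negative total weight in $G_d$.''

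For the easy direction (negative cycle $\Rightarrow$ inconsistent), I would suppose $t$ is a solution and take any cycle $\pi=u_0,\dots,u_k$. Adding the upper-bound inequalities $t(u_{s+1})-t(u_s)\le b_{u_s,u_{s+1}}$ for $s=0,\dots,k-1$, the left-hand side telescopes to $t(u_k)-t(u_0)=0$, so $0\le\sum_s b_{u_s,u_{s+1}}=b$. Hence every cycle of a consistent STN has $b\ge 0$, and the existence of a negative cycle rules out any solution.

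For the converse (no negative cycle $\Rightarrow$ consistent), I would use $o$ as a single source. Because each variable carries a domain $I_v=I_{ov}$, the vertex $o$ has an edge to every variable, so $o$ reaches all of $V$ in $G_d$; together with the hypothesis that $G_d$ has no negative cycle, this makes the shortest-path distances $d(o,v)$ well defined and finite, with $d(o,o)=0$. I would then define the assignment $t(v):=d(o,v)$. For any edge $x\to y$ of weight $b_{xy}$, the shortest-path triangle inequality $d(o,y)\le d(o,x)+b_{xy}$ yields $t(y)-t(x)\le b_{xy}$; applying this to every edge shows that $t$ satisfies all the upper-bound inequalities, hence all constraints of $C$ and all domains (the domain bounds $a_v\le t(v)\le b_v$ are just the inequalities along the edges $o\to v$ and $v\to o$, using $t(o)=0$). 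Therefore $t$ is a solution and $\N$ is consistent.

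The hard part will be making the shortest-path step in the converse rigorous: I must justify that ``no negative cycle'' is \emph{precisely} the condition under which the distances $d(o,v)$ exist and obey the triangle inequality, which I would pin down via Bellman--Ford, where a negative cycle manifests exactly as a distance that can be decreased indefinitely. A secondary technical point is variables with unbounded domains, for which the edge $o\to v$ carries weight $+\infty$: if no finite-weight path from $o$ reaches such a $v$, then $d(o,v)=+\infty$, and I would treat these vertices separately, assigning them any value meeting their finite lower bounds, since no finite upper bound constrains them. Once these points are settled, the biconditional follows by combining the two directions.
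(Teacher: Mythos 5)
The paper never proves this lemma: it is imported verbatim from Shostak (1981) as a known result, so there is no internal proof to compare against, and your task was effectively to supply the classical argument from scratch. Your proof does exactly that, and it is essentially correct. The reduction to the distance graph $G_d$ (each interval constraint becoming two upper-bound difference inequalities), the observation that the upper endpoint of $\bigotimes\pi$ equals the total edge weight of $\pi$, the telescoping argument for the forward direction, and the choice of $t(v)=d(o,v)$ as a shortest-path potential for the converse — with the triangle inequality $d(o,y)\le d(o,x)+\mathrm{wt}(x\to y)$ delivering feasibility of every edge constraint — is the standard textbook proof (the one underlying both Shostak's loop-residue theorem and the Dechter--Meiri--Pearl treatment of STNs). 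Your use of $o$ as a single source is sound in this paper's setting, since every domain $I_v$ is treated as a constraint $I_{ov}$, so $o$ reaches every vertex, and the appeal to Bellman--Ford for ``no negative cycle iff finite distances obeying the triangle inequality'' is a legitimate classical fact to invoke. The only shaky point is your side remark on unbounded domains: if several vertices all have $d(o,v)=+\infty$, you cannot assign them values independently ``meeting their finite lower bounds,'' because such vertices may be mutually constrained (two variables with unbounded domains joined by $I_{vw}=[0,1]$, say); the correct fix is to repeat the shortest-path construction inside each component of infinite-distance vertices from an arbitrary local source, or to replace the $o\to v$ weight by a sufficiently large finite constant $M$, which creates no new negative cycle. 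In the paper's formulation, however, domains are intervals $[a_v,b_v]$ with real endpoints, so this corner case never arises and your main argument already establishes everything the lemma asserts.
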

 \begin{lemma}\label{lem:2}
   \red{Given a consistent STN $\network$ with $n = |V|$, for any path
   $\pi$ of length $\ge n$ there is a path $\pi'$ of length $< n$ such
   that $\bigotimes \pi' \subseteq \bigotimes \pi$.}
 \end{lemma}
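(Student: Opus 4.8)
The plan is to remove cycles from $\pi$ one at a time, showing that each removal can only tighten the composed interval, and then to induct on the length. First I would invoke the pigeonhole principle: since $\pi = u_0, u_1, \dotsc, u_k$ has length $k \ge n = |V|$, its vertex sequence has $k+1 > n$ entries and hence some vertex repeats, say $u_i = u_j$ with $i < j$. This splits $\pi$ into three consecutive subpaths: an initial segment $\alpha$ from $u_0$ to $u_i$, a cycle $\gamma$ from $u_i$ back to $u_j = u_i$, and a final segment $\beta$ from $u_j$ to $u_k$. By associativity of composition, $\bigotimes \pi = (\bigotimes \alpha) \comp (\bigotimes \gamma) \comp (\bigotimes \beta)$, while the shortened sequence $\pi'' = u_0, \dotsc, u_i, u_{j+1}, \dotsc, u_k$ obtained by excising $\gamma$ satisfies $\bigotimes \pi'' = (\bigotimes \alpha) \comp (\bigotimes \beta)$. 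I would note that $\pi''$ is a genuine path, since $u_i = u_j$ means the edge from $u_i$ to $u_{j+1}$ coincides with the edge from $u_j$ to $u_{j+1}$ already present in $\pi$.

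The crux is to compare $\bigotimes \pi''$ with $\bigotimes \pi$, which reduces to understanding the interval $[a, b] = \bigotimes \gamma$ of the excised cycle. Writing $\bigotimes \alpha = [p, q]$ and $\bigotimes \beta = [r, s]$ and using the definition of composition, one gets $\bigotimes \pi = [p + a + r, q + b + s]$ and $\bigotimes \pi'' = [p + r, q + s]$, so that $\bigotimes \pi'' \subseteq \bigotimes \pi$ holds precisely when $a \le 0 \le b$. The key observation — and what I expect to be the main point to get right — is that both inequalities follow from the consistency of $\N$ via Lemma~\ref{lem:inconsisent_stn}. Since $\gamma$ is a cycle and $\N$ is consistent, $\gamma$ is not a negative cycle, giving $b \ge 0$. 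For the reverse inequality I would reverse $\gamma$: because $C$ contains $I_{wv} = I_{vw}^{-1}$ for every edge, the reversed vertex sequence is again a cycle in $\N$, and a short computation (each reversed edge contributes $[-\beta_s, -\alpha_s]$) shows its composition is $[-b, -a]$. Consistency forbids this reversed cycle from being negative as well, so $-a \ge 0$, i.e.\ $a \le 0$.

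Having established $a \le 0 \le b$, I conclude $\bigotimes \pi'' \subseteq \bigotimes \pi$ for a single cycle excision. Finally I would close the argument by induction on the length: if $\pi''$ already has length $< n$ we are done; otherwise $\pi''$ still has length $\ge n$, so the same excision applies to it, and since each step strictly decreases the length while preserving the subset relation (by transitivity of $\subseteq$), after finitely many steps we reach a path $\pi'$ of length $< n$ with $\bigotimes \pi' \subseteq \bigotimes \pi$, as required.
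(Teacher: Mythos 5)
Your proof is correct and takes essentially the same approach as the paper's: pigeonhole a repeated vertex, excise the resulting cycle, use consistency (hence no negative cycles, Lemma~\ref{lem:inconsisent_stn}) to obtain containment, and iterate until the length drops below $n$. If anything, you are more careful than the paper, whose proof invokes only the non-negativity of the excised cycle (i.e., $b \ge 0$) and leaves the lower-endpoint inequality $a \le 0$ --- which you derive via the reversed cycle, using that $C$ is closed under inversion --- implicit.
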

 \begin{proof}
   \red{Since the length of $\pi$ is $\ge n$, $\pi$ must have a cycle at
     a variable $v$. As the cycle is not negative, removing the cycle
     and leaving only $v$ in the path results in a path $\pi'$ with
     $\bigotimes \pi' \subseteq \bigotimes \pi$. Repeating this
     procedure until there is no cycle gives the desired result.}
 \end{proof}

\begin{lemma}\label{lem:1}
  Let $\network$ be an STN and $\N'$ its AC-closure. Then $\N$ is
  consistent iff $\N'$ has no empty domain.
\end{lemma}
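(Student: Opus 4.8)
The plan is to reduce the biconditional to a statement about $\N'$ alone. Since the AC-closure $\N'$ is by definition equivalent to $\N$, the two networks have the same solution set, so $\N$ is consistent iff $\N'$ is. It therefore suffices to show that the arc-consistent network $\N'$ is consistent iff none of its domains is empty.

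For the direction ``consistent $\Rightarrow$ no empty domain'' I would argue directly: any solution of $\N'$ assigns to each variable $v$ a value taken from its domain $I'_v$, so every $I'_v$ contains at least that value and is non-empty.

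The converse is where the work lies, and I would combine the negative-cycle characterization of Lemma~\ref{lem:inconsisent_stn} with the arc-consistency inclusion of Corollary~\ref{cor:path}. Assume every domain of $\N'$ is non-empty and suppose, towards a contradiction, that $\N'$ has a negative cycle $\pi$ at some variable $v$, so $\bigotimes\pi = [a,b]$ with $b < 0$. Since all domains and all constraints are non-empty intervals, the composition along any path is non-empty, hence $a \le b < 0$. Applying Corollary~\ref{cor:path} to $\pi$ viewed as a path from $v$ to $v$ gives $I'_v \subseteq I'_v \comp \bigotimes\pi$. Writing $I'_v = [p,q]$, the right-hand side equals $[p+a, q+b]$, and the inclusion $[p,q] \subseteq [p+a, q+b]$ forces $q \le q+b$, i.e.\ $b \ge 0$, contradicting $b < 0$. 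Thus $\N'$ has no negative cycle and is consistent by Lemma~\ref{lem:inconsisent_stn}; equivalence then transfers consistency back to $\N$.

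The main obstacle is exactly this converse step: recognizing that, once all domains are non-empty, arc-consistency is incompatible with a negative cycle, because Corollary~\ref{cor:path} would pin the cycle's base variable to an empty domain. The delicate point to check is that the cycle may run through the origin $o$ and thereby use the (possibly reduced) domain edges of $\N'$; the non-emptiness hypothesis is precisely what keeps $\bigotimes\pi$ a genuine interval so that the upper-bound comparison is valid, and it also covers the case $v = o$ uniformly, where $I'_o = [0,0]$ still yields $0 \le b$ and the same contradiction.
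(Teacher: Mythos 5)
Your proof is correct, and its hard direction takes a genuinely different route from the paper's, although both rest on the same two ingredients: the negative-cycle characterization (Lemma~\ref{lem:inconsisent_stn}) and the arc-consistency path inclusion (Corollary~\ref{cor:path}). The paper works contrapositively: from inconsistency of $\N$ it extracts a negative cycle $\pi$ of $\N$ at some $w$ and \emph{pumps} it, applying the path inclusion to $\pi^k$ with $k$ chosen so large that the upper endpoint of $I'_w \comp \left(\bigotimes\pi^k \comp I_{wv}\right)$ drops below that of $I'_v$, forcing $I'_v = \varnothing$. You argue forward instead: if every domain of $\N'$ is non-empty, a single application of the path inclusion around any cycle of $\N'$ gives $I'_v \subseteq I'_v \comp \bigotimes\pi$, and comparing upper endpoints of non-empty intervals forces the upper endpoint of $\bigotimes\pi$ to be $\ge 0$; hence $\N'$ has no negative cycle, is consistent by Lemma~\ref{lem:inconsisent_stn}, and equivalence transfers consistency back to $\N$. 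Your route buys simplicity---no pumping, no choice of $k$, no auxiliary neighbor of the cycle---and it also invokes the right supporting result: the paper at the corresponding step cites Lemma~\ref{lemma:minimal}, whose hypothesis (consistency of the network) is exactly what is being denied in that case, whereas Corollary~\ref{cor:path}, which you use, is what is actually needed there.

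One caveat: the delicate point you flag at the end is real, but your diagnosis of it is off. The issue with a cycle of $\N'$ passing through $o$ is not whether $\bigotimes\pi$ is ``a genuine interval''; it is that such a cycle uses domain edges labelled with the \emph{reduced} intervals $I'_{ov}=I'_v$, which are not constraints in $C$, so Lemma~\ref{lemma:AC} and hence Corollary~\ref{cor:path} do not literally cover those edges. The inclusion must be checked for them separately, which is immediate once all domains are non-empty, since $I_o \comp I'_{ov} = I'_v$ and $0 \in I'_v \comp I'_{vo}$. This is a routine verification rather than a gap---and the paper's own proof glosses over the analogous point for cycles through $o$---but a complete write-up should include it.
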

\begin{proof}
  We prove $\N$ is inconsistent iff $\N'$ has an empty domain. As $\N$
  and $\N'$ are equivalent, if $\N'$ has an empty domain, then $\N$ is
  inconsistent.

  Now suppose $\N$ is inconsistent. \red{Then by
    Lemma~\ref{lem:inconsisent_stn}, there exists a negative cycle
    $\pi$ in $\N$ at some $w$ such that $\bigotimes \pi=[l, h]$ with $h < 0$. Now let $v$ be a variable in
    $\N$ with $I_{wv} = [e, f]$ and let $I_v'=[a, b], I_w'=[c, d]$ be
    the domains of $v$ and $w$ in $\N'$, respectively. Choose
    $k \in \mathbb{N}$ sufficiently large, such that $kh < b - d - f$.
    Then, by Lemma~\ref{lemma:minimal} we have
    \begin{IEEEeqnarray}{rCl}
      I_v' &\subseteq& I_w' \comp \left(\bigotimes \pi^k \comp I_{wv}\right)\label{eq:2}\\
      &=& [c, d] \comp ([kl, kh] \comp [e, f])\nonumber\\
      &=& [c + kl + e, d + kh + f]\nonumber,
    \end{IEEEeqnarray}
    where $\pi^k$ is the concatenation of $k$ copies of path $\pi$.
    Because $kh < b - d - f$, \eqref{eq:2} is possible only if $I_v'$
    is empty.%
  }%
\end{proof}
\begin{theorem}\label{the:minimal}
  Let $\network$ be a consistent STN and $\N'$ its AC-closure. Then
  all domains in $\N'$ are minimal.
\end{theorem}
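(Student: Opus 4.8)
The goal is to show that for every $v \in V$, every value $t_v$ in the AC-closure domain $I_v'$ extends to a full solution of $\mathcal{N}$. My plan is to reduce this entirely to a single, well-chosen instance of Lemma~\ref{lemma:minimal}, anchored at the auxiliary zero variable $o$. The first thing I would record is that the domain of $o$ in $\mathcal{N}'$ is unchanged, i.e.\ $I_o' = [0,0]$: since $o$ is the fixed zero time point its domain is the singleton $[0,0]$, and by Lemma~\ref{lem:1} the consistency of $\mathcal{N}$ guarantees that $\mathcal{N}'$ has no empty domain, so $I_o'$ cannot shrink away from $[0,0]$.

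Next, observing that $\mathcal{N}'$ is arc-consistent (being the AC-closure) and consistent (being equivalent to the consistent $\mathcal{N}$), I would apply Lemma~\ref{lemma:minimal} with $w = o$ to obtain, for every $v$, the inclusion $I_v' \subseteq I_o' \otimes I^m_{ov} = [0,0] \otimes I^m_{ov} = I^m_{ov}$, where $I^m_{ov}$ is the minimal constraint from $o$ to $v$ (the same in $\mathcal{N}$ and in the equivalent $\mathcal{N}'$). Composing with the singleton $[0,0]$ acts as the identity, so the minimal constraint $I^m_{ov}$ is recovered exactly.

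Finally I would interpret $I^m_{ov}$ semantically: because $o$ is pinned to $0$ in every solution, $I^m_{ov}$ is precisely the set of values that $v$ attains across all solutions of $\mathcal{N}$, that is, the true minimal domain of $v$. Hence $I_v' \subseteq I^m_{ov}$ says exactly that every value in $I_v'$ is realized by some solution, which is the definition of $I_v'$ being minimal. The conceptual crux is not a calculation but the decision to instantiate Lemma~\ref{lemma:minimal} at $o$ together with the remark that keeping $I_o' = [0,0]$ makes composition with it collapse the minimal constraint onto the minimal domain; the only point needing a touch of care is justifying $I_o'=[0,0]$ through Lemma~\ref{lem:1} and noting that minimal constraints are preserved under passage to the equivalent network $\mathcal{N}'$.
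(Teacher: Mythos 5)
Your proof is correct, and although it hinges on the same key tool as the paper's proof --- Lemma~\ref{lemma:minimal} --- it deploys it in a genuinely different and cleaner way. The paper applies the lemma to \emph{all pairs} of variables: it builds a refined network $\mathcal{N}^*$ in which every pairwise constraint is replaced by the minimal constraint $I^m_{vw}$, asserts that $\mathcal{N}^*$ coincides with the minimal STN of $\mathcal{N}$, reads off domain minimality from that identification, and needs a separate case analysis for disconnected constraint graphs. You instead instantiate the lemma once per variable, anchored at the zero-point variable $o$: after noting $I_o'=[0,0]$ (nonempty by Lemma~\ref{lem:1}, and AC-closure domains can only shrink), the inclusion $I_v' \subseteq I_o' \otimes I^m_{ov} = I^m_{ov}$ lands directly on the minimal domain, since $I^m_{ov}$ is the projection of the (shared) solution set onto $v$ once $o$ is pinned to $0$. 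This buys two things: the connected/disconnected case distinction disappears, because $o$ is adjacent to every variable through the domain edges, and the concluding step is literally the definition of a minimal domain rather than the paper's somewhat terse identification of $\mathcal{N}^*$ with the minimal STN. The price is a heavier reliance on the convention that $o$ is a bona fide variable to which Lemma~\ref{lemma:minimal} applies and whose incident edges are arc-consistent in the AC-closure; both points do hold --- the paper itself routes paths through $o$ in the proof of Theorem~\ref{thm:centralized_AC}, and arc-consistency of each edge $\{o,v\}$ follows from $\varnothing \ne I_v' \subseteq I_v$ --- but they deserve to be stated explicitly in a polished write-up.
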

\begin{proof}
  If the constraint graph $G_{\N}$ is connected, i.e., for any two
  variables $v,w$, there is a path in $G_{\N}$ that connects $v$ to
  $w$, then we may replace the constraint from $v$ to $w$ with the
  nonempty minimal constraint $I^m_{vw}$ \red{(or add $I^m_{vw}$, if
    there was no constraint between $v$ and $w$)}. We write the
  refined STN as $\mathcal{N}^*$. For any two variables $v,w$, by
  Lemma~\ref{lemma:minimal}, $I_v$ is contained in
  $I_w \comp I^m_{wv}$ and $I_w$ is contained in $I_v \comp I^m_{vw}$.
  This shows that $\mathcal{N}^*$ is the same as the minimal STN of
  $\N$, and thus, establishes the minimality of each $I_v$.

  In case the constraint graph is disconnected, we consider the
  restriction of $\N$ to its connected components instead. The same
  result applies.
\end{proof}

Two special solutions can be constructed if $\N$ is arc-consistent and
has no empty domain.

\begin{proposition}\label{prop:solution}
  Let $\network$ be an arc-consistent STN with $D=\{I_v \mid v\in V\}$
  and $I_v=[a_v,b_v]$ for each $v$. If no $I_v$ is empty, then the
  assignments $A = \Set*{a_{v}\given v\in V}$ and
  $B = \Set*{b_{v} \given v \in V}$ are two solutions of
  $\mathcal{N}$.
\end{proposition}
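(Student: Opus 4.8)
The plan is to show that the assignment $A = \{a_v \mid v \in V\}$ (assigning the lower endpoint $a_v$ of each domain $I_v$ to the variable $v$) satisfies every constraint in $C$; the argument for $B$ is symmetric, using upper endpoints instead. Since each $a_v \in I_v$ by hypothesis (domains are non-empty), $A$ is a legitimate assignment of values from the domains, so the only thing to verify is that it satisfies each binary constraint $I_{vw} = [a_{vw}, b_{vw}]$, i.e.\ that $a_{vw} \le a_w - a_v \le b_{vw}$ for every constrained pair.

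First I would fix an arbitrary constraint $I_{vw} \in C$ and use arc-consistency to pin down the endpoints. Applying Lemma~\ref{lemma:AC} to the pair in both directions gives $I_v \subseteq I_w \comp I_{wv}$ and $I_w \subseteq I_v \comp I_{vw}$. Writing out the composition $I_v \comp I_{vw} = [a_v + a_{vw}, b_v + b_{vw}]$, the containment $I_w \subseteq I_v \comp I_{vw}$ forces $a_v + a_{vw} \le a_w$, that is, $a_{vw} \le a_w - a_v$. Symmetrically, from $I_v \subseteq I_w \comp I_{wv}$ with $I_{wv} = [-b_{vw}, -a_{vw}]$ one obtains $a_w - b_{vw} \le a_v$, i.e.\ $a_w - a_v \le b_{vw}$. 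Together these two inequalities give exactly $a_{vw} \le a_w - a_v \le b_{vw}$, so $A$ satisfies $I_{vw}$. Since $v,w$ were arbitrary, $A$ is a solution.

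The argument for $B$ proceeds identically by reading off the \emph{upper} endpoints of the same containments: $I_w \subseteq I_v \comp I_{vw}$ yields $b_w \le b_v + b_{vw}$, hence $b_w - b_v \le b_{vw}$, and the reverse containment yields $a_{vw} \le b_w - b_v$. I would note that the key subtlety, and the step I expect to be the main (though modest) obstacle, is being careful about which endpoint of each interval the containment constrains: a containment $[\alpha,\beta] \subseteq [\gamma,\delta]$ is equivalent to the \emph{two} inequalities $\gamma \le \alpha$ and $\beta \le \delta$, and one must match the correct lower-with-lower and upper-with-upper comparisons to extract the needed bounds. Everything else is a direct unpacking of the composition operation together with Lemma~\ref{lemma:AC}.
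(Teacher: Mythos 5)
Your proof is correct, and it takes a genuinely different route from the paper's. The paper argues indirectly: it invokes Theorem~\ref{the:minimal} to conclude that the domains of the arc-consistent network already coincide with the domains of its minimal network, and then cites a known result from the constraint-processing literature (Corollary 3.2 in Dechter's \emph{Constraint Processing}) stating that in a minimal STN the tuple of lower endpoints and the tuple of upper endpoints each form a solution. You instead verify the claim from first principles: applying Lemma~\ref{lemma:AC} in both directions (legitimate, since the paper's conventions guarantee $I_{wv}=I_{vw}^{-1}\in C$ whenever $I_{vw}\in C$), you unpack the containments $I_w \subseteq I_v \comp I_{vw}$ and $I_v \subseteq I_w \comp I_{wv}$ into endpoint inequalities --- correctly matching lower-with-lower and upper-with-upper, which is valid because all intervals involved are nonempty --- and obtain $a_{vw} \le a_w - a_v \le b_{vw}$ for $A$ and $a_{vw} \le b_w - b_v \le b_{vw}$ for $B$. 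Your approach buys self-containedness and sidesteps a subtle dependency in the paper's argument: Theorem~\ref{the:minimal} is stated for \emph{consistent} STNs, so the paper's proof implicitly needs that arc-consistency plus nonempty domains implies consistency (via Lemma~\ref{lem:1}), whereas your proof assumes nothing of the sort and in fact \emph{establishes} consistency as a by-product by exhibiting a solution. The paper's route buys brevity and a tie-in with minimality; note also that your calculation is essentially the concrete interval-arithmetic specialization of the max/min-closedness argument sketched in Remark~\ref{rem:2}(iii).
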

\begin{proof}
  Let $\N'=\la V, D', C'\ra$ be the minimal STN of $\N$. By
  Theorem~\ref{the:minimal}, we have $D'=D$ and $\N'$ is equivalent to
  $\N$. The above claim follows as the assignments
  $A = \Set*{a_{v}\given v\in V}$ and
  $B = \Set*{b_{v} \given v \in V}$ are two solutions of the minimal
  STN $\N'$ (cf. \cite[Corollary 3.2]{dechter2003constraint}).
\end{proof}

\begin{theorem}\label{thr:1}
  Enforcing AC is sufficient to solve STP.
\end{theorem}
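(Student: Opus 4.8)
The plan is to read ``solving STP'' as the two standard tasks: (i) deciding whether the network $\N$ is consistent, and (ii) in the consistent case, producing an explicit solution (equivalently, exhibiting minimal domains). I would show that both tasks are discharged by passing to the AC-closure $\N'$ of $\N$, which by definition is obtained purely by enforcing arc-consistency, so that no propagation step other than AC is ever required.

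First I would settle the decision problem. By Lemma~\ref{lem:1}, $\N$ is consistent iff its AC-closure $\N'$ has no empty domain. Hence, once AC has been enforced, one simply inspects the resulting domains: an empty domain certifies inconsistency, and otherwise $\N$ is consistent. This reduces consistency checking entirely to arc-consistency.

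Second, for the consistent case, I would invoke Theorem~\ref{the:minimal} to conclude that every domain of $\N'$ is already minimal, so that AC leaves nothing further to tighten. Then, since $\N'$ is arc-consistent with no empty domain, Proposition~\ref{prop:solution} applies and the lower-bound assignment $A = \Set*{a_v \given v\in V}$ (equivalently the upper-bound assignment $B$) is a genuine solution of $\N'$; as $\N'$ is equivalent to $\N$, it is a solution of $\N$ as well. Thus enforcing AC both decides consistency and yields an explicit solution, which is exactly the assertion of the theorem.

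The step I expect to require the most care is making precise that the operational procedure of \emph{enforcing} AC actually converges, in finitely many tightening steps, to the AC-closure $\N'$ as defined, rather than refining the real-valued domains indefinitely. The concern is real because each application of Lemma~\ref{lemma:AC-0} can shrink a domain by an arbitrarily small amount, so termination is not automatic over a continuous domain. Here the key leverage is Lemma~\ref{lem:2}: in a consistent network only paths of length $<n$ can contribute new bounds, so each domain bound can be tightened only finitely often and the propagation must halt; in the inconsistent case some domain instead collapses to $\varnothing$ and is detected via Lemma~\ref{lem:1}. Once this termination-and-correctness argument for the closure is in place, the theorem follows immediately by assembling Lemma~\ref{lem:1}, Theorem~\ref{the:minimal}, and Proposition~\ref{prop:solution}.
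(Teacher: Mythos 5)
Your proposal is correct and takes essentially the same route as the paper's own proof, which simply combines Lemma~\ref{lem:1} (empty domain in the AC-closure iff inconsistent) with Proposition~\ref{prop:solution} (the bound assignments $A$ and $B$ solve the network otherwise); your additional appeal to Theorem~\ref{the:minimal} is harmless but redundant, since it is already absorbed into the proof of Proposition~\ref{prop:solution}. Your closing paragraph on termination of the enforcement procedure concerns a point the paper deliberately keeps out of this theorem and handles separately in Theorem~\ref{thm:centralized_AC}, where termination is guaranteed not by waiting for domains to empty but by the $n$-iteration cutoff of Algorithm~\ref{algorithm_ac}, with Lemma~\ref{lem:2} used (as you anticipate) to show that $n$ rounds suffice in the consistent case.
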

\begin{proof}
  Let $\N$ be an STN and $\N'$ its AC-closure. If $\N'$ has an empty
  domain, then $\N$ has no solution by Lemma~\ref{lem:1}. If $\N'$
  does not have an empty domain, then we can use
  Proposition~\ref{prop:solution} to find a solution.
\end{proof}
\begin{remark}\label{rem:2}
  (i) \red{As solving an STN is equivalent to solving a system of
    linear inequalities, the solution set of an STN is a convex
    polyhedron.} Thus any convex combination of the two solutions $A$
  and $B$ is again a solution of the STN. (ii) Enforcing AC can in
  essence find all solutions of an STN: Suppose $\N$ is arc-consistent
  and has no empty domain. We pick an arbitrary variable $v$ that has
  not been instantiated yet, then assign any value from $D_v$ to $v$,
  and enforce AC on the resulting network. We repeat this
  process until all variables are instantiated. \red{(iii)
    Proposition~\ref{prop:solution} can also be obtained by first
    showing that STP constraints are both max/min-closed, and then
    using the result in \cite[Thm~4.2]{jeavons_tractable_1995}, which
    states that the AC-closure of a constraint network over
    max/min-closed constraints have the maximal and the minimal values
    of the domains as two solutions. As a consequence of this,
    Theorem~\ref{the:minimal} can also be obtained, because the
    solution set of an STN is convex (cf.~Remark~\ref{rem:2}~(i)).}
\end{remark}

\subsection{A Centralized AC Algorithm for the STP}
In this section we propose an AC algorithm, called \AC, to solve STNs.
The algorithm is presented as Algorithm~\ref{algorithm_ac}.

\begin{algorithm}[t]
  \DontPrintSemicolon \SetKwInOut{Inpu}{Input}
  \SetKwInOut{Output}{Output} \SetKw{KwSt}{s.t.} \Inpu{An STN
    $\network$ and its constraint graph $\graph$, where
    \red{$|V| = n$}.} \Output{An equivalent network that is AC, or
    ``inconsistent''.} \BlankLine $Q \gets \varnothing$\;
  \For{$k \gets 1$ \KwTo {$n$}}{\label{ln:for-start}
    \ForEach{$v \in V$}{ $\magenta{I_v' \gets I_v}$\;%
      \ForEach{$w \in V$ \KwSt $\Set*{v,w} \in E$}{
        $I_v \gets I_v \cap I_w \comp I_{wv}$\label{ln:update}\; }
      \lIf{$I_v = \varnothing$}{ \Return ``inconsistent'' }

      \lIf{$I_v' = I_v$}{\label{ln:equal} $Q \gets Q \cup \{v\}$ }
      \lElse{ $Q \gets Q \setminus \{v\}$ } } \lIf{$\#Q = n$}{ \Return
      $\mathcal{N}$ \label{ln:break} }
  }\label{ln:for-end}
  \Return ``inconsistent'' \label{ln:overlimit}
  \caption{\AC}\label{algorithm_ac}
\end{algorithm}

\begin{theorem}\label{thm:centralized_AC}
  Given an input STN $\mathcal{N}$, Algorithm~\ref{algorithm_ac}
  returns ``inconsistent" if $\N$ is inconsistent. Otherwise, it
  returns the AC-closure of $\N$.
\end{theorem}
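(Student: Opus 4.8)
The plan is to prove the statement by combining a few invariants maintained by Algorithm~\ref{algorithm_ac} with the path-based lemmas already available. First I would record the \emph{equivalence/monotonicity invariant}. The single update on Line~\ref{ln:update} replaces $I_v$ by $I_v\cap(I_w\comp I_{wv})$; by Lemma~\ref{lemma:AC-0} the set $I_w\comp I_{wv}$ is exactly the set of values of $v$ that admit an arc-consistent support at $w$, so the update only discards values of $v$ that cannot be extended across the edge $\{v,w\}$. Hence every update shrinks domains and removes no value lying in a solution, so the network kept by the algorithm stays equivalent to $\N$ throughout and its domains are non-increasing. This makes the empty-domain return sound: if some $I_v$ becomes $\varnothing$, then by equivalence $\N$ has no solution and ``inconsistent'' is correct.

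Second, I would establish a \emph{maximality invariant}: for every arc-consistent STN $\mathcal N''$ equivalent to $\N$ (with domains $I_v''$) the inclusion $I_v''\subseteq I_v$ holds for all $v$ at every moment of the run. The base case holds because $\mathcal N''$ is equivalent to, hence dominated by, $\N$; for the step, if $I_w''\subseteq I_w$ then Lemma~\ref{lemma:AC} gives $I_v''\subseteq I_w''\comp I_{wv}\subseteq I_w\comp I_{wv}$, so intersecting $I_v$ with $I_w\comp I_{wv}$ cannot delete any point of $I_v''$. Thus the algorithm never prunes strictly below the AC-closure.

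Third comes the \emph{return analysis}. The set $Q$ accumulates the variables left unchanged in the current sweep and drops any variable that changed (Line~\ref{ln:equal} and its \texttt{else}), so $\#Q=n$ on Line~\ref{ln:break} signals a full sweep in which no domain changed, i.e.\ $I_v=I_v\cap(I_w\comp I_{wv})$, equivalently $I_v\subseteq I_w\comp I_{wv}$, for every edge. By Lemma~\ref{lemma:AC-0} the returned network is then arc-consistent; being also equivalent to $\N$ and, by the maximality invariant, dominating every AC-equivalent network, it is exactly the AC-closure. The same reasoning rules out false positives: an inconsistent $\N$ can never reach Line~\ref{ln:break} with non-empty domains, since a clean sweep with non-empty domains yields the AC-closure with no empty domain, whence $\N$ would be consistent by Lemma~\ref{lem:1}, a contradiction. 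Consequently an inconsistent $\N$ must either produce an empty domain (and return ``inconsistent'') or never trigger Line~\ref{ln:break}, in which case it returns ``inconsistent'' on Line~\ref{ln:overlimit}.

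The main obstacle is the \emph{iteration bound}: for a consistent $\N$ the algorithm must reach a clean sweep within the $n$ outer iterations, so that it returns its AC-closure on Line~\ref{ln:break} rather than falling through to Line~\ref{ln:overlimit}. I would first give a path characterization of the closure, namely that for a consistent $\N$ the closure domain of $v$ equals $\bigcap_\pi\bigl(I^{\mathrm{in}}_u\comp\bigotimes\pi\bigr)$ over all paths $\pi$ from $u$ to $v$, where $I^{\mathrm{in}}_u$ is the input domain: Corollary~\ref{cor:path} applied to the closure gives one inclusion, while the network built from these intersections is arc-consistent and equivalent to $\N$, so the maximality invariant gives the other. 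Then I would show by induction on the sweep index $k$ that after $k$ sweeps $I_v\subseteq I^{\mathrm{in}}_u\comp\bigotimes\pi$ for every path $\pi$ from $u$ to $v$ of length at most $k$, using associativity of $\comp$ and its distributivity over non-empty intersections to prepend each freshly relaxed edge to a path, and noting that the in-place (Gauss--Seidel) updates only accelerate propagation. Finally Lemma~\ref{lem:2} says every path of length $\ge n$ is dominated by one of length $<n$, so the path intersection is already attained at length $\le n-1$; the sweep induction then forces the domains to equal the closure after at most $n-1$ sweeps, whence the $n$-th sweep changes nothing and returns on Line~\ref{ln:break}. The two delicate points are exactly this off-by-one (length $<n$ dovetailing with the loop running to $k=n$) and the faithful correspondence between one sweep and one unit of path length under in-place updates.
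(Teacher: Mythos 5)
Your proposal tracks the paper's own proof quite closely: both arguments (i) use preservation of the solution set to justify the ``inconsistent'' return on an empty domain, (ii) read the full-queue condition on Line~\ref{ln:break} as ``a sweep that changed nothing,'' which via Lemma~\ref{lemma:AC-0} makes the output AC, and (iii) obtain the iteration bound by inducting on sweeps to show that after $k$ sweeps each domain is contained in the intersection of compositions along all paths of length $\le k$ (the paper's Eq.~(5)), then capping path lengths by Lemma~\ref{lem:2}. Your explicit handling of the in-place (Gauss--Seidel) updates and of the off-by-one is correct and is a point the paper passes over silently; it works because the induction only needs an upper bound, and in-place updates only shrink domains further.

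The genuine gap is the base case of your maximality invariant: ``$\mathcal N''$ is equivalent to, hence dominated by, $\N$'' is a non sequitur. Equivalence does not imply domination: take the single constraint $w-v\in[0,1]$ with $I_v=I_v''=[0,1]$, $I_w=[0,2]$, $I_w''=[-1,2]$; the two networks have the same solution set, yet $I_w''\not\subseteq I_w$. What rescues the base case for a \emph{consistent} $\N$ is that $\mathcal N''$ is assumed arc-consistent: a consistent AC network has minimal domains (Theorem~\ref{the:minimal}, whose proof applies to any consistent AC STN via Lemma~\ref{lemma:minimal}), so each $I_v''$ equals the projection of the common solution set and is therefore contained in $I_v$. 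This is not a triviality but exactly the minimality machinery the paper leans on (also tersely) when it asserts that an AC output equivalent to $\N$ must be the AC-closure. For an \emph{inconsistent} $\N$ the base case is genuinely shaky (a disconnected inconsistent network can have equivalent AC networks with arbitrarily large domains on its consistent components), so you should not route the no-false-positive argument through ``the output is the AC-closure'' and Lemma~\ref{lem:1}; instead use Proposition~\ref{prop:solution} directly: a clean sweep with non-empty domains gives an AC, equivalent network with non-empty domains, hence $\N$ has a solution. Finally, a smaller gap of the same kind: your claim that the network of path-intersections $\bigcap_\pi\bigl(I^{\mathrm{in}}_u\comp\bigotimes\pi\bigr)$ is itself arc-consistent is asserted, not proved; it needs distributivity of $\comp$ over non-empty intersections, whereas the paper sidesteps this by citing the characterization of minimal domains as path intersections from Dechter et al.\ and comparing the sweeps against $I_v^m$ rather than against the closure.
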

\begin{proof}
  We first note that intersection and composition of constraints do
  not change the solution set of the input STN $\mathcal{N}$. This has
  two implications: First, if a domain $I_v$ becomes empty during the
  process of the algorithm, then the solution set of $\mathcal{N}$ is
  empty and $\mathcal{N}$ is inconsistent. Second, if the algorithm
  terminates and its output $\mathcal{N}'$ is AC, then $\mathcal{N}'$
  is the AC-closure of $\mathcal{N}$. \red{Consequently, it suffices
    to show that if the algorithm terminates and returns
    $\mathcal{N}'$, then $\mathcal{N}'$ is AC.}


  We first consider the case, where the algorithm returns $\N'$ in
  line~\ref{ln:break} at the $k$th iteration of the for-loop
  (lines~\ref{ln:for-start}--\ref{ln:for-end}) for some
  $1\leq k\leq n$. We show that $\mathcal{N}'$ is AC. Let $I_v^k$ be
  the domain of $v$ obtained after the $k$th iteration of the
  for-loop. Due to lines~\ref{ln:update} and \ref{ln:equal}, we have
  for all $\Set*{v,w} \in E$ that
  $I_v^k \subseteq I_v^{k-1} \cap (I_w^{k-1} \otimes I_{wv})$ and
  $I_w^{k-1} = I_w^k$. Thus we have for $\Set*{v,w} \in E$ that
  $I_v^k \subseteq I_w^{k} \otimes I_{wv}$, which is by
  Lemma~\ref{lemma:AC-0} equivalent to saying that $I_{vw}$ is AC
  w.r.t.\ domains $I_v^k$ and $I_w^k$. Hence, the output $\mathcal{N}'$
  is AC.

  Now suppose that the algorithm exited in line~\ref{ln:overlimit}
  returning ``inconsistent''. Thus, at the $n$th iteration of the
  for-loop we have $\#Q < n$ in line~\ref{ln:break}. We prove that
  $\N$ is inconsistent by contradiction. Assume that $\N$ is
  consistent.
  For any $v\in V$ and any $k\geq 1$, we write $\Pi_{v}^k$
  for the set of paths from $o$ (the auxiliary variable denoting the
  zero time point) to $v$ with length $\leq k$ in the constraint graph
  of $\N$. We claim

  \begin{equation}
    I_v^{k-1} \subseteq \bigcap_{\pi \in \Pi^{k}_v }\bigotimes{ \pi} \label{eq:5}
  \end{equation}
  for any $k \geq 1$. \red{Then, with $I_v^m $ being the minimal
    domain of $v$, we have
    \begin{equation*}
      I_v^m \subseteq I_v^{n-1} \subseteq \bigcap_{\pi \in \Pi^{n}_v }\bigotimes{ \pi} = I_v^m,
    \end{equation*}
    because $I^m_{v}$ is the intersection of the compositions along
    all paths in $\mathcal{N}$ from $o$ to $v$ (cf.~\cite[\S
    3]{DechterMP91}), where it suffices to only build compositions
    along paths of length $\le n$ by Lemma~\ref{lem:2}. Thus
    $I_v^{n} = I_v^{n-1} = I_v^m$ for all $v \in V$, which is a
    contradiction to our assumption that at the $n$th iteration of the
    for-loop we have $\#Q < n$ in line~\ref{ln:break}.%
  }
  
  \red{%
   We now prove \eqref{eq:5} by using induction on $k$.
  First, for $k = 1$, since $\Pi_v^1$ contains only one path of length
  1 (i.e., the edge $\{o,v\}$), we have
  $I_v^0 = I_v = \bigcap_{\pi\in \Pi_v^1 } \bigotimes \pi$. Now
  suppose \eqref{eq:5} is true for $k-2$ for all $w \in V$. Then by
  line~\ref{ln:update} and our induction hypothesis we have}
  \begin{align*}
    I_v^{k-1} & \subseteq I_v^{k-2} \cap \left(\bigcap_w I_w^{k-2}  \otimes I_{wv}\right)                                                  \\
             & \subseteq I_v^{k-2} \cap \left(\bigcap_w \left(\bigcap_{\pi \in \Pi_{w}^{k-1}}\bigotimes{\pi}\right) \otimes I_{wv}\right) \\
             & \subseteq \left( \bigcap_{\pi \in \Pi_{v}^{k-1}}\bigotimes{\pi} \right) \cap \left( \bigcap_{(\pi \in \Pi_{v}^{k}) \wedge (|\pi|\geq 2)}\bigotimes{\pi}  \right)                                                \\
             & = \bigcap_{\pi\in \Pi_{v}^{k}}\bigotimes {\pi},
  \end{align*}
  which proves \eqref{eq:5}.%
\end{proof}

\begin{theorem}
  Algorithm~\ref{algorithm_ac} runs in time $O(en)$, where $e$
  is the number of edges of the constraint graph of the input STN and
  $n$ is the number of variables.
\end{theorem}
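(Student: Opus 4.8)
The plan is to bound the running time by a direct analysis of the three nested loops of Algorithm~\ref{algorithm_ac} together with the bookkeeping on the set $Q$. First I would observe that the outermost for-loop (lines~\ref{ln:for-start}--\ref{ln:for-end}) executes at most $n$ times, since $k$ ranges from $1$ to $n$ and the algorithm can only return earlier (in line~\ref{ln:break} or upon detecting an empty domain). This accounts for the factor $n$ in the bound.

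Next I would show that a single iteration of the outer loop costs only $O(e)$. Within one such iteration the two inner loops range over every vertex $v \in V$ and, for each $v$, over every neighbour $w$ with $\{v,w\}\in E$; thus the number of executions of the update in line~\ref{ln:update} during one pass equals $\sum_{v\in V}\deg(v)$, which by the handshaking lemma is exactly $2e$. The crucial point is that each execution of line~\ref{ln:update} performs only one interval intersection and one composition, both of which act on the four endpoints of the involved intervals and hence run in $O(1)$ time; the emptiness test and the copy $I_v' \gets I_v$ are likewise $O(1)$. So the substantive work of a full pass is $O(e)$.

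It then remains to confirm that maintaining $Q$ introduces no extra factor. Here I would represent $Q$ by a Boolean membership array together with an integer counter for $\#Q$. With this representation each update $Q \gets Q \cup \{v\}$ or $Q \gets Q \setminus \{v\}$ (and the corresponding counter adjustment) takes $O(1)$ time, and the test $\#Q = n$ in line~\ref{ln:break} is also $O(1)$. Consequently each vertex $v$ is handled in $O(\deg(v)+1)$ time, so one pass of the outer loop costs $O(e+n)$. Using the standing assumption that every variable occurs in some constraint, we have $\deg(v)\ge 1$ for all $v$, whence $2e=\sum_{v}\deg(v)\ge n$ and therefore $n=O(e)$; thus $O(e+n)=O(e)$.

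Multiplying the per-pass cost $O(e)$ by the at-most-$n$ passes yields the claimed bound $O(en)$. I expect the only genuinely delicate step to be the $Q$-bookkeeping: a naive set implementation would make the membership updates or the cardinality test cost more than $O(1)$ and thereby spoil the bound, so the argument hinges on choosing the constant-time data structures for $Q$ described above. The rest is a routine counting argument resting on the handshaking identity $\sum_{v}\deg(v)=2e$ and the constant-time nature of the interval operations.
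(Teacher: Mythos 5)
Your proposal is correct and follows the same approach as the paper's own (one-line) proof: the outer for-loop runs at most $n$ times, and each pass costs $O(e)$. Your additional details --- the handshaking count $\sum_v \deg(v) = 2e$, the constant-time interval operations, the array-plus-counter implementation of $Q$, and the observation that $n = O(e)$ because every variable appears in some constraint --- are exactly the bookkeeping the paper leaves implicit.
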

\begin{proof}
  There are at most $n$ iterations of the for-loop and each iteration
  involves $O(e)$ operations.
\end{proof}

\begin{remark}\label{rem:3}
  \red{Algorithm~\ref{algorithm_ac} can also be understood as
    computing the shortest path from a source vertex $o$ to every
    other vertex $v$ and the shortest path from every other vertex $v$
    to the source vertex $o$. This can be realized in time $O(en)$ by
    using a shortest path tree algorithm with negative cycle
    detection~(cf.~\cite[Section~7.2]{tarjan_data_1983} and
    \cite[Section~7.1]{korte_combinatorial_2012}.}
\end{remark}

\section{Solving the MaSTP with Arc-Consistency}\label{sec:solving-mastp-with}
In this section we extend \AC to a distributed algorithm \DAC to solve
multiagent simple temporal networks (MaSTNs).


\begin{definition}\cite{DBLP:journals/jair/BoerkoelD13}\label{def_dis}
  A \emph{multiagent simple temporal network (MaSTN)}
  is a tuple
  ${\mathcal{M}}=\la{\cal P}, C^\text{X} \ra$, where
  \begin{itemize}
  \item ${\cal P}= \Set*{\N_i \given i = 1, \dotsc, p}$ is a set of
    \emph{local} STNs, where each
    $\mathcal{N}_{{i}}=\la V_{{i}}, {D_i}, C_{{i}}
    \ra$ is an STN belonging to agent $i$ and we require that
    $V_i\cap V_j=\varnothing$ for any two different agents
    $i,j = 1, \dotsc, p$.
  \item $C^\text{X}$ is a set of
    \emph{external} constraints, where each constraint is over two
    variables belonging to two different agents.
  \end{itemize}
\end{definition}

\noindent Constraint graphs for MaSTNs can be defined analogously as
that for STNs, where we use $E^\text{X}$ for the set of edges
corresponding to constraints in $C^{X}$. See Figure~\ref{fig:mastp}
for an illustration. In Figure~\ref{fig:mastp}, the edges in
$E^\text{X}$ are represented as red lines.

\begin{definition}
  Suppose
  ${\mathcal{M}}=\la {\cal P}, C^\text{X}\ra$ is an MaSTN. Let
  $I_{vw} \in C^\text{X}$ with
  $v \in V_i, w \in V_j$ be an external constraint.
  We say that $I_{vw}$ is an external constraint of agent $i$, and
  write $C_i^\text{X}$ for the set of external constraints of agent
  $i$. We call
  $v$ and $w$ a \emph{shared} and an
  \emph{external} variable of
  agent $i$, respectively. We write $V_i^\text{X}$ for the set of
  external
  variables of agent $i$. In Figure~\ref{fig:mastp}, the vertices for
  shared variables are represented as red circles.
\end{definition}

\begin{algorithm}
  \DontPrintSemicolon%
  \SetKwInOut{Inpu}{Input}%
  \SetKwInOut{Output}{Output}%
  \SetKw{KwSt}{s.t.}%
  \Inpu{%
    $\N_i$: agent ${i}$'s portion of MaSTN $\mathcal{M}$; \newline%
    $V_i^\text{X}$: the set of agent $i$'s external variables;
    \newline%
    $C_i^\text{X}$: the set of agent $i$'s external
    constraints;\newline%
    $\textit{parent}(i)$: the parent of agent $i$ w.r.t.\
    $T(\mathcal{M})$;\newline%
    $\textit{children}(i)$: the children of agent\ $i$ w.r.t.\
    $T(\mathcal{M})$;\newline%
    $n$: the number of variables of $\M$. }%
  \Output{%
    Agent $i$'s portion of the AC-closure of $\mathcal{M}$ or
    ``inconsistent''.%
  }%
  $Q_i \gets \varnothing$\;
  \For{$k \gets 1$ \KwTo $n$\label{ln:for1}}{%
    Send the domains of the shared variables to the
    neighbors.\label{ln:sync1}\;%
    Receive the domains of the external variables from the neighbors.
    \label{ln:sync2}\;%
    \ForEach{$v \in V_i$\label{ln:update1}}{%
      $I_v' \gets I_v$\;%
      \ForEach{$w \in V_i \cup V_i^{\text{X}}$ \KwSt
        $\{v, w\} \in E\magenta{_i} \cup E_i^\text{X}$}{%
        $I_v \gets I_v \cap I_w \comp
        I_{wv}$\label{ln:update-mastn}\;%
      }%
      \If{$I_v = \varnothing$\label{ln:inconst1}}{%
        Broadcast ``inconsistent''.\;%
        \Return ``inconsistent''
      }\label{ln:inconst2}%
      \lIf{$I_v' = I_v$}{ $Q_i \gets Q_i \cup \{v\}$ } \lElse{
        $Q_i \gets Q_i \setminus \{v\}$ } }
    \If{$\#Q_i = \#V_i$\label{ln:qfull1}}{%
      \If{$\textit{root}(i)$\label{ln:root1}}{%
        Send inquiry (``Are all $Q_i$ full?'', $k$) to
        $\textit{children}(i)$%
      }\label{ln:root2}%
      \While{true}{%
        $m \gets \textsc{ReceiveMessage}()$\;
        \If{$m$ is domains of external variables from a
          neighbor\label{ln:continue1}}{%
          \textbf{break}%
        }\label{ln:continue2}%
        \If{$m$ is inquiry (\textnormal{``Are all $Q_i$ full?'',
            $k$)}\label{ln:inquiry1}}{%
          \If{$\textit{leaf}(i)$\label{ln:leaf1}}{%
            Send feedback (``yes'', $k$) to $\textit{parent}(i)$%
          }\label{ln:leaf2}%
          \lElse{%
            Send $m$ to $\textit{children}(i)$%
          }
        }%
        \If{$m$ is \textit{feedback} \textnormal{(``yes'',
            $k$)}\label{ln:feedback1}}{%
          \If{all feedbacks received from $\textit{children}(i)$}{%
            \If{$\textit{root}(i)$}{%
              Broadcast ``arc-consistent''\;
              \Return $\N_i$%
            }%
            \lElse{Send $m$ to $\textit{parent}(i)$%
            }%
          }%
        }\label{ln:feedback2}\label{ln:inquiry2}%
        \If{$m$ is \textnormal{``arc-consistent''}}{%
          \Return $\N_i$%
        }%
        \If{$m$ is \textnormal{``inconsistent''}}{%
          \Return ``inconsistent''%
        }%
      }%
    }\label{ln:qfull2}%
  }\label{ln:for2}%
  \Return ``inconsistent''
  \caption{\DAC}\label{algorithm_dac}
\end{algorithm}


\DAC is presented in Algorithm~\ref{algorithm_dac}. In \DAC each agent
$i$ gets as input its portion $\N_i$ of the input MaSTN $\mathcal{M}$
and the set $C_i^\text{X}$ of its external constraints, and runs its
own algorithm. Similar to \AC, \DAC updates the domains of $\N_i$ at
each iteration of the for-loop and maintains a queue $Q_i$ to record
the information about the unchanged domains. When a domain becomes
empty during the updates, then the agent can terminate the algorithm
and conclude that the input MaSTN $\mathcal{M}$ is inconsistent. There
are however aspects in \DAC that are different from \AC, which stem
from the fact that in MaSTP an agent cannot have the global knowledge
of the states of other agents' processes without sharing certain
information with other agents. These aspects are the following:

\begin{enumerate}

\item The total number $n$ of the variables in the input MaSTN is
  initially not known to individual agents. This, however, can easily
  be determined using an echo algorithm~\cite{chang_echo_1982}. We can
  therefore regard $n$ as given as an input to \DAC.

\item As the agents may run their processes at different paces, at
  each iteration of the for-loop (lines~\ref{ln:for1}--\ref{ln:for2}),
  they synchronize the domains of their external variables
  (lines~\ref{ln:sync1}--\ref{ln:sync2}). Otherwise, some agents might
  use stale external domains and make wrong conclusions.
  
\item When a domain becomes empty while running \DAC, an agent
  broadcasts (lines~\ref{ln:inconst1}--\ref{ln:inconst2}) this
  information to other agents so that they can
  terminate their algorithms as
  soon as possible.
  
\item If the queue $Q_i$ of an agent $i$ is full (i.e., it
  contains all of the agent's variables in $V_i$) after an
  iteration of the for-loop, then the
  agent shares this information with all other agents in
  $\mathcal{M}$ so as to jointly determine whether the queues of all
  agents are full and the network is arc-consistent
  (lines~\ref{ln:root1}--\ref{ln:root2} and
  \ref{ln:inquiry1}--\ref{ln:inquiry2}).
  
\item If the queue $Q$ of an agent is \emph{not} full after an
  iteration of the for-loop, then the agent
  broadcasts
  this information to all other agents, so that they can move to the
  next iteration of the for-loop as soon as possible.

\end{enumerate}
All the preceding aspects are subject to communication of certain
information between agents. \DAC coordinates this communication while
(i) preserving the privacy of each agent and (ii) reducing the
duration of any idle state of an individual agent. Concretely:

\begin{itemize}
\item Each agent shares information only with the agents who are
  connected through an external constraint. We call them the
  \emph{neighbors} of the agent. This neighborhood-relationship among the
  agents
  induces a graph that we call henceforth the \emph{agent
    graph}.

\item Each agent shares with its neighbors \emph{only} the
  domains of its
  shared
  variables. No other information is shared (such as its network
  structure, constraints, private variables and their domains) and
  only the neighbors w.r.t.\ the agent graph can share the information.
  This property is a critical advantage over \DPPC
  \cite{DBLP:journals/jair/BoerkoelD13}, as \DPPC often creates new
  external constraints during the process and reveal more private
  information of the agents than necessary.

\item Each agent uses a \emph{broadcasting} mechanism to share global
  properties of the input MaSTN, i.e., an agent first sends a message
  (e.g., ``inconsistent'') to its neighbors, then the neighbors
  forward the message to their neighbors and so on, until all agents
  receive the message. To reduce the number of messages, duplicates
  are ignored by the agents.

  An agent $i$ broadcasts the following messages: ``arc-consistent'',
  ``inconsistent'' and ``$Q_i$ is not full'', where the last message
  is indirectly broadcasted by agent $i$ skipping
  lines~\ref{ln:qfull1}--\ref{ln:qfull2} and moving to the next
  iteration of the for-loop and then sending its shared domains to its
  neighbors. This initiates a chain reaction among the idle neighbors
  of agent $i$ who have not moved to the next iteration yet, as they
  quit the idle states (lines~\ref{ln:continue1}--\ref{ln:continue2})
  and move to the next iteration of the for-loop and then send also
  their shared domains to their idle neighbors
  (lines~\ref{ln:sync1}--\ref{ln:sync2}).
  
\item There is a dedicated agent who checks at each iteration of its
  for-loop (given its queue is full) whether the queues of all other
  agents are full at the same iteration. This dedicated agent is
  determined by building a minimal spanning tree (e.g., by using an
  echo algorithm~\cite{chang_echo_1982}) $T(\mathcal{M})$ of the agent
  graph. The agent who is the root (henceforth the \emph{root agent})
  of this tree becomes then the dedicated agent.

  The root agent sends an inquiry to its
  children to check whether the queues of all its descendants are
  full (lines~\ref{ln:root1}--\ref{ln:root2}). The inquiry is then
  successively forwarded by the descendants
  whose queues are full. We have to distinguish here between two
  cases:

  (1) If all descendants' queues are \emph{full}, then the
  inquiry reaches all the leaf agents and returns back as feedbacks
  (lines~\ref{ln:leaf1}--\ref{ln:leaf2})
  until the root agent receives all the feedbacks
  (lines~\ref{ln:feedback1}--\ref{ln:feedback2}) and broadcasts
  ``arc-consistency''.

  (2) If a descendant's queue is \emph{not full},
  then the descendant moves on to the next iteration of the for-loop
  and initiates a chain reaction among other agents by sending the
  domains of its shared variables to its neighbors (cf.~the second
  paragraph of the third bullet point).

\end{itemize}

Due to the properties so far considered, \DAC is guaranteed to
simulate the behavior of \AC while allowing concurrent domain update
operations.

\begin{theorem}
  Let $\M = \la \mathcal{P}, C^\text{X} \ra$ be an MaSTN. Let
  $\N_\textnormal{max}$ be a network with
  $e_\textnormal{max}= \max\Set*{e_i + e_i^\textnormal{X}\given 1 \le
    i \le p}$, where $e_i$ and $e_i^\textnormal{X}$ are the number of
  edges of the constraint graph of $\N_i$ and the number of external
  constraints of agent $i$, respectively. Then
  Algorithm~\ref{algorithm_dac} enforces AC on $\M$ in time
  $O(e_\textnormal{max}n)$.
\end{theorem}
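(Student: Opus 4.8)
The plan is to prove the statement in two stages. First I would establish that \DAC correctly enforces AC by making precise the claim, stated just before the theorem, that \DAC ``simulates the behavior of \AC''. Let $\N_\M$ denote the single STN obtained by taking $V=\bigcup_i V_i$ together with the constraint set $\bigcup_i C_i\cup C^\text{X}$; running \AC (Algorithm~\ref{algorithm_ac}) on $\N_\M$ is the centralized reference computation. The key is an invariant, proved by induction on the iteration counter $k$: after every agent has completed its $k$th pass of the for-loop, the domain of each $v\in V_i$ equals the domain of $v$ produced after the $k$th iteration of \AC on $\N_\M$. The synchronization step (lines~\ref{ln:sync1}--\ref{ln:sync2}) is exactly what makes this hold, since before updating, each agent refreshes the domains of its external variables, so the update $I_v\gets I_v\cap I_w\comp I_{wv}$ of line~\ref{ln:update-mastn} uses the same neighbouring domains that line~\ref{ln:update} of \AC would use.

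Granting the invariant, correctness of the termination conditions follows. An empty domain is detected locally and broadcast (lines~\ref{ln:inconst1}--\ref{ln:inconst2}), matching the ``inconsistent'' output of \AC. The global stability test $\#Q=n$ of \AC is evaluated distributively: each agent checks $\#Q_i=\#V_i$ locally, and the root agent aggregates these answers through the inquiry/feedback sweep over the spanning tree $T(\M)$ (lines~\ref{ln:root1}--\ref{ln:root2} and~\ref{ln:inquiry1}--\ref{ln:inquiry2}), broadcasting ``arc-consistent'' precisely when every local queue is full at one common iteration $k$. Because the for-loop runs at most $n$ times, Theorem~\ref{thm:centralized_AC} then transfers verbatim: \DAC returns the AC-closure when $\M$ is consistent and ``inconsistent'' otherwise.

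For the time bound I would measure the work of a single busiest agent, as this governs the makespan. The for-loop runs at most $n$ times. In one iteration, agent $i$ performs the nested domain updates ranging over $v\in V_i$ and $w\in V_i\cup V_i^\text{X}$ with $\{v,w\}\in E_i\cup E_i^\text{X}$, costing $O(e_i+e_i^\text{X})$. The per-iteration communication is of the same order: exchanging shared domains touches only the external edges incident to $i$ (at most $e_i^\text{X}$ of them), while forwarding a broadcast, an inquiry, or a feedback is local and costs $O(\deg_i)$, where the degree $\deg_i$ of agent $i$ in the agent graph is at most $e_i^\text{X}$. Hence each iteration costs agent $i$ at most $O(e_i+e_i^\text{X})=O(e_\textnormal{max})$ time, and summing over the at most $n$ iterations gives $O(e_\textnormal{max}\, n)$.

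The main obstacle is the correctness half, specifically justifying the simulation invariant under asynchronous execution. I would need to argue that the chain reaction by which an agent with a non-full queue forces its still-idle neighbours to advance (lines~\ref{ln:continue1}--\ref{ln:continue2}) keeps the agents' iteration counters in lock-step, so that no agent ever updates at iteration $k+1$ using domains drawn from a mixture of iterations; and that the tree-based detection can report ``arc-consistent'' only if every queue was simultaneously full at a single $k$. Once this per-iteration lock-step is secured, the complexity bound is routine, the only point worth verifying being that the total communication an agent handles in one iteration is dominated by its $O(e_\textnormal{max})$ local computation.
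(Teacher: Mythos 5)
The paper itself never proves this theorem --- it is asserted on the strength of the informal claim that \DAC ``simulates the behavior of \AC'' --- so your proposal must stand on its own, and it contains one genuine flaw: the exact-equality invariant you build everything on is false. Algorithm~\ref{algorithm_ac} updates domains \emph{in place}: when it processes $v$ in sweep $k$, a neighbour $w$ that happens to come earlier in the sweep already carries its sweep-$k$ value. In \DAC, by contrast, the domain of an \emph{external} variable $w$ is frozen for the whole round at the value received in lines~\ref{ln:sync1}--\ref{ln:sync2}, i.e.\ the owner's round-$(k-1)$ value. Whatever variable ordering you fix for \AC on the amalgamated network, some external pair $\{v,w\}$ has $v$ processed before $w$, and then \AC refines $I_w$ using the fresh $I_v^{k}$ while the agent owning $w$ uses the stale $I_v^{k-1}$; the resulting domains after round $k$ can differ strictly. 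So Theorem~\ref{thm:centralized_AC} does not ``transfer verbatim,'' and the lock-step property you propose to establish cannot repair this, because the discrepancy is within a round (Gauss--Seidel versus stale external values), not a matter of agents drifting across rounds.

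The theorem is still true, but the argument must be restructured around containments rather than equality. (i) Every update in line~\ref{ln:update-mastn} preserves the solution set, and by monotonicity of $\cap$ and $\comp$ the current domains always contain the AC-closure domains, which form a fixed point of the updates. (ii) If the root broadcasts ``arc-consistent'' at round $k$, every queue was full at that same round $k$ (the inquiry/feedback carries $k$), so $I_w^{k}=I_w^{k-1}$ for \emph{all} variables; hence every value used in round $k$ --- fresh local or stale external --- equals $I_w^{k}$, giving $I_v^{k}\subseteq I_w^{k}\comp I_{wv}$ for every edge, i.e.\ the output is AC by Lemma~\ref{lemma:AC-0}, hence equals the AC-closure. (iii) For termination within $n$ rounds, observe that the induction~\eqref{eq:5} in the proof of Theorem~\ref{thm:centralized_AC} only uses the Jacobi-style containment $I_v^{k}\subseteq I_v^{k-1}\cap\bigcap_{w}\bigl(I_w^{k-1}\comp I_{wv}\bigr)$, which holds verbatim for \DAC since stale values only make the right-hand side larger; so for a consistent $\M$ all domains are minimal after at most $n$ rounds and all queues fill, while an inconsistent $\M$ forces an empty domain. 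Your lock-step synchronizer argument is still needed (to make ``round $k-1$ values'' well defined under asynchrony), and your complexity accounting --- at most $n$ rounds, $O(e_i+e_i^{\text{X}})$ constraint checks plus $O(e_i^{\text{X}})$ messages handled per agent per round, hence makespan $O(e_\textnormal{max}n)$ --- is fine once correctness is argued this way.
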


\begin{figure*}[t]
  \centering
  \hfill
\subcaptionbox{\textsf{Scale-free-1}}{
\begin{tikzpicture}
\begin{axis}[
    ymode = log,
    set layers=standard,
    xlabel={Network density},
    xmin=5, xmax=50,
    ymin=0, ymax=500000000,
    xtick={5,10,20,30,40,50},
%
%
    legend style={at={(0.5,0.6)},anchor=west, font=\scriptsize, draw=none},
    ymajorgrids=true,
    xmajorgrids=true,
    grid style=dashed
]

\addplot[
    color=blue,
    mark=square,
    on layer={axis foreground}
    ]
    coordinates {
    (8,128511621)(14,242283360)(20,298364793)(26,336141948)(35, 377131148)(50, 436242948)
    };
    \addlegendentry{\PPC}
    
    \addplot[
    color=red,
    mark=triangle,
    ]
    coordinates {
    (8,141984)(14,190904)(20,307840)(26,395200)(35, 473100)(50, 593300)
    };
      \addlegendentry{\AC}
    
\end{axis}
\end{tikzpicture}}
\hfill
\subcaptionbox{\textsf{Scale-free-2}}[.3\linewidth]{
\begin{tikzpicture}
\begin{axis}[
    ymode = log,
    xlabel={$n$},
    xmin=300, xmax=800,
    xtick={300,400,500,600,700,800},
    legend style={at={(0.5,0.6)},anchor=west, font=\scriptsize, draw=none},
    ymajorgrids=true,
    xmajorgrids=true,
    grid style=dashed,
]

\addplot[
    color=blue,
    mark=square,
    ]
    coordinates {
    (300,4323192)(400,7079856)(500,8316399)(600,12038190)(700,16038190)(800,17297700)
    };
    \addlegendentry{\PPC}
    
    \addplot[
    color=red,
    mark=triangle,
    ]
    coordinates {
    (300,30784)(400,32760)(500,34440)(600,35520)(700,38864)(800,50816)
    };
      \addlegendentry{\AC}
    
\end{axis}
\end{tikzpicture}}
\hfill
\subcaptionbox{\textsf{New York}}[.3\linewidth]{
\begin{tikzpicture}
\begin{axis}[
    ymode = log,
    xlabel={$n$},
    xmin=300, xmax=1024,
    xtick={200,400,600,800, 1024},
%
%
    legend style={at={(0.5,0.6)},anchor=west, font=\scriptsize, draw=none},
    ymajorgrids=true,
    xmajorgrids=true,
    grid style=dashed
]

\addplot[
    color=blue,
    mark=square,
    ]
    coordinates {
    (335,22689)(419,29037)(524,139299)(655,167358)(819,191109)(1024,252372)
    };
    \addlegendentry{\PPC}
    
    \addplot[
    color=red,
    mark=triangle,
    ]
    coordinates {
    (335,4490)(419,5710)(524,7140)(655,10812)(819,16058)(1024,20300)
    };
      \addlegendentry{\AC}
    
\end{axis}
\end{tikzpicture}
}
\hfill{}

\caption{Evaluation of \AC and \PPC. The $y$-axes (on the log scale)
  represent the number constraint checks.}\label{fig:exp_cen}
\end{figure*}
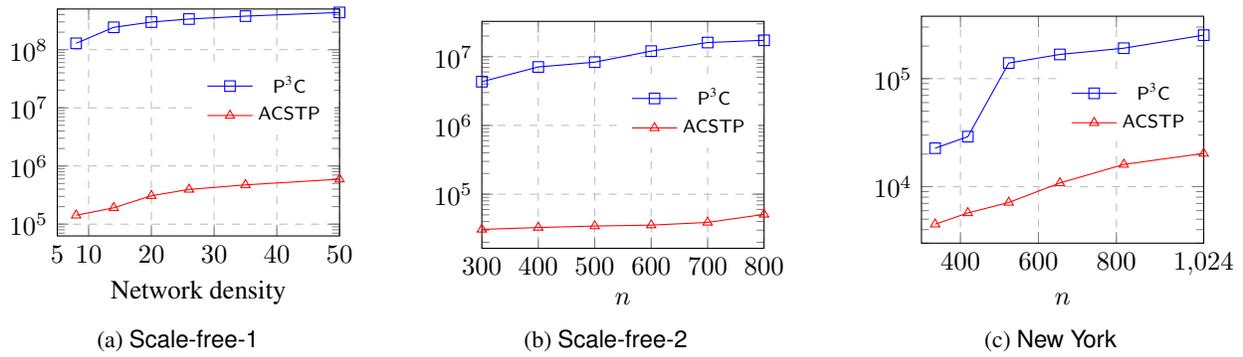

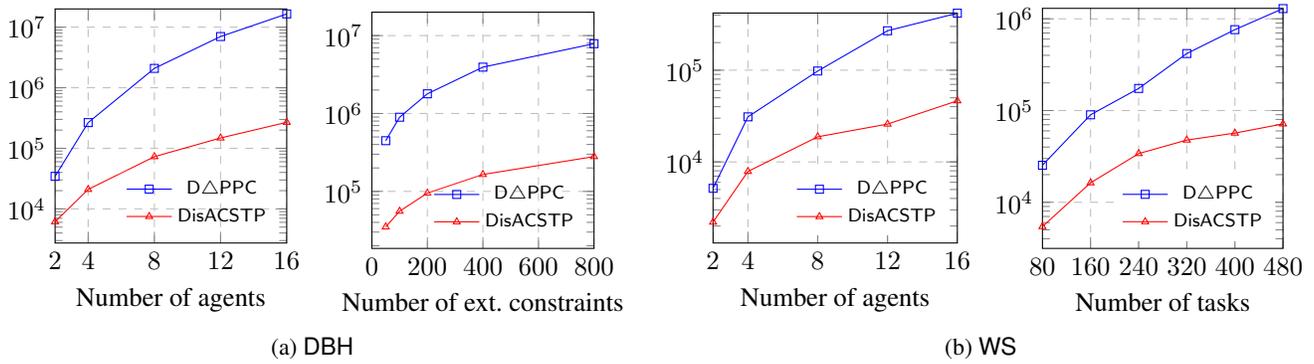
\begin{figure*}
  \begin{subcaptionbox}{\textsf{DBH}\label{fig:dbh}}[.48\linewidth]{
        \begin{tikzpicture}
      \begin{axis}[xlabel={Number of agents},
        xmin=2, xmax=16, ymin=0, ymax=20000000,
        xtick={2, 4, 8, 12, 16},
        x=0.22cm, y=0.35cm,
        legend pos=south east, 
        legend style={font=\scriptsize, draw=none, fill opacity=0.1, text opacity = 1},
        label style={font=\normalsize}, tick label
        style={font=\normalsize},
        ymajorgrids=true,
        xmajorgrids=true,
        grid style=dashed,
        ymode = log
        ]
       
        \addplot[ mark size=1.5pt, color=blue, mark=square, ]
        coordinates {
         (2,34440)(4,265680)(8,2086560)(12,6998640)(16,16537920)
        }; \addlegendentry{\DPPC}
        
         \addplot[ mark size=1.5pt, color=red, mark=triangle, ] coordinates
        {
          (2,6140)(4,20980)(8,72780)(12,148140)(16,268680)
          }; \addlegendentry{\DAC}
        \end{axis}
      \end{tikzpicture}
      \begin{tikzpicture}
        \begin{axis}[xlabel={Number of ext.\ constraints},
        xmin=0, xmax=800, ymin=0, ymax=20000000,
          xtick={0, 200, 400, 600, 800},
          x=0.0037cm, y=0.45cm,
        legend pos=south east, 
        legend style={font=\scriptsize, draw=none, fill opacity=0.1, text opacity = 1},
        label style={font=\normalsize}, 
        tick label style={font=\normalsize},
        ymajorgrids=true,
        xmajorgrids=true,
        grid style=dashed,
        ymode = log
        ]
      
        \addplot[ mark size=1.5pt, color=blue, mark=square, ]
        coordinates {
         (50, 447454)(100,894909)(200,1789818)(400,3937600)(800,7875200)
        }; \addlegendentry{\DPPC}
        
        \addplot[ mark size=1.5pt, color=red, mark=triangle, ] coordinates
        {
          (50,34940)(100,55620)(200,95040)(400,164820)(800,279660)
        }; \addlegendentry{\DAC}
      \end{axis}
    \end{tikzpicture}
}
  \end{subcaptionbox}
  \begin{subcaptionbox}{\textsf{WS}\label{fig:ws}}[0.5\textwidth]{
     \begin{tikzpicture}
      \begin{axis}[xlabel={Number of agents}, 
        xmin=2, xmax=16, ymin=0, ymax=420000,
        xtick={2,4,8,12,16},
        x=0.232cm, y=0.53cm,
        legend pos=south east, legend style={font=\scriptsize, draw=none},
        label style={font=\normalsize}, tick label
        style={font=\normalsize},
         ymajorgrids=true,
        xmajorgrids=true,
        grid style=dashed, 
        ymode = log]

        \addplot[ mark size=1.5pt, color=blue, mark=square, ]
        coordinates {
         (2,5157)(4,30996)(8,97950)(12,269139)(16,418233)
        }; \addlegendentry{\DPPC}

         \addplot[ mark size=1.5pt, color=red, mark=triangle, ] coordinates
        {
         (2,2208)(4,7876)(8,18792)(12,25850)(16,46376)
       }; \addlegendentry{\DAC}
     \end{axis}
   \end{tikzpicture}
   \begin{tikzpicture}
      \begin{axis}[ xlabel={Number of tasks}, 
        xmin=80, xmax=480, ymin=0, ymax=1300000,
        xtick={80,160,240,320,400,480},
        x=0.008cm, y=0.53cm,
        legend pos=south east, legend style={font=\scriptsize, draw=none},
        label style={font=\normalsize}, tick label
        style={font=\normalsize}, 
        ymajorgrids=true,
        xmajorgrids=true,
        grid style=dashed,
        ymode = log
        ]

        \addplot[ mark size=1.5pt, color=blue, mark=square, ]
        coordinates {
        (80,25326)(160,89811)(240,174105)(320,417939)(400,761745)(480,1292391)
        }; \addlegendentry{\DPPC}

         \addplot[ mark size=1.5pt, color=red, mark=triangle, ] coordinates
        {
        (80,5430)(160,16284)(240,33924)(320,47670)(400,56760)(480,71470)
        }; \addlegendentry{\DAC}
      \end{axis}                               
    \end{tikzpicture}}
  \end{subcaptionbox}

 \caption{Evaluation of \DAC and \DPPC. The $y$-axis (on the
    log scale) represent the number of NCCCs.}\label{fig:exp}
\end{figure*}

\section{Evaluation}\label{sec:evaluation}


In this section we experimentally compare our algorithms against the
state-of-the-art algorithms for solving STNs. For centralized
algorithms, we compare our \AC algorithm against
\citeauthor{PlankenWK08}'s \PPC algorithm \shortcite{PlankenWK08}; for
distributed algorithms, we compare our \DAC algorithm against
\citeauthor{DBLP:journals/jair/BoerkoelD13}'s \DPPC algorithm
\shortcite{DBLP:journals/jair/BoerkoelD13}. All experiments for
distributed algorithms used an asynchronous simulator in which agents
are simulated by processes which communicate only through message
passing and default communication latency is assumed to be zero. Our
experiments were implemented in Python~3.6 and carried out on a
computer with an Intel Core i5 processor with a 2.9 GHz frequency per
CPU, 8 GB memory \footnote{The source code for our evaluation can be
  found in \url{https://github.com/sharingcodes/MaSTN}}.

As measures for comaring performances we use the number of constraint
checks and the number of non-concurrent constraint checks (NCCCs)
performed by the centralized algorithms and the distributed
algorithms, respectively. Given an STN $\network$, a constraint check
is performed when we compute relation
$r \gets I_{vw} \cap (I_{vu} \otimes I_{uw})$ and check if
$r = I_{vw}$ or $r \not\subseteq I_{vw}$.

\subsection{\AC vs. \PPC}
\subsubsection{Datasets}
We selected instances from the benchmark datasets of STNs used in
\cite{planken2012computing} for evaluations. We considered the
scale-free graphs (\textsf{Scale-free-1}) with 1000 vertices and
density parameter varying from 2 to 50. We also considered the
scale-free graphs (\textsf{Scale-free-2}) with varying vertex count.
The scale-free density parameter for this selection is 5. Beside these
artificially constructed graphs, we also considered graphs that are
based on the road network of New York City (\textsf{New York}). This
dataset contains 170 graphs on 108--3906 vertices, 113--6422 edges.

\subsubsection{Results}
The results are presented in Figure~\ref{fig:exp_cen}, where base-10
log scales are used for the $y$-axes. For the scale-free graphs we
observe that \AC is 100--1000 times faster than \PPC. The dataset
\textsf{New York} only contains very sparse networks (each network's
density is less than $1\%$), thus both algorithms could easily solve
these networks. However, we still observe that \AC is about 5--12
times faster than \PPC.

\subsection{\DAC vs. \DPPC}

\subsubsection{Datasets}
We selected instances from the benchmark datasets of MaSTNs used in
\cite{DBLP:journals/jair/BoerkoelD13} for evaluations. The first
dataset \textsf{BDH} was randomly generated using the multiagent
adaptation of Hunsberger's \shortcite{hunsberger2002algorithms} random
STN generator. Each MaSTN has $N$ agents each with start time points
and end time points for 10 activities, which are subject to various
local constraints. In addition, each MaSTN has $X$ external
contraints. We evaluated the algorithms by varying the number of
agents $(N \in \{2, 4, 8, 12, 16\}, X = 50 \times (N-1))$ and the
total number of external constraints
$(N = 16, X \in \{100, 200, 400, 800\})$.

The second dataset \textsf{WS} is derived from a multiagent factory
scheduling domain \cite{wilcox2012optimization}, where $N$ agents are
working together to complete $T$ tasks in a manufacturing environment.
We evaluated algorithms by varying the number of agents
$(N \in \{2, 4, 8, 12, 16\}, T = 20 \times N)$ and the total number of
tasks $(N = 16, T \in \{80, 160, 240, 320, 400, 480\})$.

\subsubsection{Results}

The results are presented in Figure~\ref{fig:exp}, where base-10 log
scales are again used for the $y$-axes. For the \textsf{DBH} random
networks (Figure~\ref{fig:dbh}) we observe that \DAC is 5--30 times
faster than \DPPC. For the \textsf{WS} scheduling networks
(Figure~\ref{fig:ws}) \DAC is 2--10 times faster than \DPPC. For both
datasets we observe that, with increasing $x$-values, the $y$-values
(i.e., NCCCs) for \DAC grow slower than those for \DPPC.

\section{Conclusion}
In this paper we presented a novel AC-based approach for solving the
STP and the MaSTP. We have shown that arc-consistency is sufficient
for solving an STN. Considering that STNs are defined over infinite
domains, this result is rather surprising. Our empirical evaluations
showed that the AC-based algorithms are significantly more efficient
than their PC-based counterparts. This is mainly due to the fact
\magenta{that} PC-based algorithms add many redundant constraints in
the process of triangulation. More importantly, since our AC-based
approach does not impose new constraints between agents that are
previously not directly connected, it respects as much privacy of
these agents as possible. We should note here that even though our
distributed algorithm \DAC showed remarkable performance, it can be
further fine-tuned by using different termination detection
mechanisms~(cf.~\cite{mattern1987algorithms} and
\cite[Ch.~14]{raynal_distributed_2013}).

It would be interesting to see how the result in this paper can be
used for solving the general disjunctive temporal problems
\cite{STERGIOU200081}. Potential extensions of our paper also include
adapting our AC algorithms to incremental algorithms for the STP
\cite{planken_incrementally_2010}, dynamic situations
\cite{MorrisMV01} and uncertainty \cite{VenableY05}.

\section*{Acknowledgments}
\red{We thank the anonymous reviewers, who pointed out the connections to max/min-closed constraints and the shortest-path problem.
The work of SL was partially supported by NSFC (No. 11671244), and the work of JL was partially
  supported by the Alexander von Humboldt Foundation.}
\bibliographystyle{aaai}

\bibliography{reference}

\end{document}